\def\BibTeX{{\rm B\kern-.05em{\sc i\kern-.025em b}\kern-.08em
    T\kern-.1667em\lower.7ex\hbox{E}\kern-.125emX}}
\algrenewcommand\textproc{}
\algrenewcommand\alglinenumber[1]{\tiny #1:}
\g@addto@macro{\UrlBreaks}{\UrlOrds}
\newcommand{\sysname}{Harpagon\xspace}
\newcommand{\sysabbr}{Harp\xspace}
\newcommand{\evalnum}{1131\xspace}
\newcommand\paraspace{\vspace*{0.25ex}}
\providecommand\parab[1]{\paraspace\noindent\textbf{#1}}
\providecommand\parae[1]{\paraspace\textbf{\textit{#1}}}
\newcommand{\ie}{\emph{i.e.,}\xspace}
\newcommand{\eg}{\emph{e.g.,}\xspace}
\newcommand{\secref}[1]{\S\ref{#1}}
\newcommand{\figref}[1]{Fig.~\ref{#1}}
\newcommand{\tabref}[1]{Table~\ref{#1}}
\newcommand{\algoref}[1]{Algorithm~\ref{#1}}
\newcommand{\lineref}[1]{Line~\ref{#1}}
\newcommand{\multilineref}[2]{Line~\ref{#1}-\ref{#2}}
\newcommand{\theoremref}[1]{Theorem~\ref{#1}}
\newcommand{\mynum}[1]{\textcolor[RGB]{0,0,0}{\ding{#1}}} 
\newcommand{\codesub}[2]{\texttt{#1}\textsubscript{\texttt{#2}}}
\theoremstyle{definition}
\newtheorem{theorem}{Theorem}
\newcommand{\tight}{\vspace{-3pt}}
\begin{document}

\title{
\sysname: Minimizing DNN Serving Cost via Efficient Dispatching, Scheduling and Splitting
\thanks{The work of Yitao Hu was supported by National Key Research and Development Program of China (2022YFB4501000) and National Natural Science Foundation of China under Grant Nos.62202328.}%
\thanks{Corresponding author: Yitao Hu (email: \href{mailto:yitao@tju.edu.cn}{yitao@tju.edu.cn}).}
}



\author{
    \IEEEauthorblockN{
        Zhixin Zhao\textsuperscript{*},
        Yitao Hu\textsuperscript{*},
        Ziqi Gong\textsuperscript{*},
        Guotao Yang\textsuperscript{*},
        Wenxin Li\textsuperscript{*},
        Xiulong Liu\textsuperscript{*},
        Keqiu Li\textsuperscript{*},
        Hao Wang\textsuperscript{\dag},
    }
    \IEEEauthorblockA{
        \textsuperscript{*}Tianjin Key Laboratory of Advanced Networking, Tianjin University, China \\
        \textsuperscript{\dag}Stevens Institute of Technology, USA
    }
}

\maketitle

\begin{abstract}

Advances in deep neural networks (DNNs) have significantly contributed to the development of real-time video processing applications. Efficient scheduling of DNN workloads in cloud-hosted inference systems is crucial to minimizing serving costs while meeting application latency constraints. However, existing systems suffer from excessive module latency during request dispatching, low execution throughput during module scheduling, and wasted latency budget during latency splitting for multi-DNN application, which undermines their capability to minimize the serving cost.

In this paper, we design a DNN inference system called \sysname, which minimizes the serving cost under latency constraints with a three-level design. It first maximizes the batch collection rate with a batch-aware request dispatch policy to minimize the module latency. It then maximizes the module throughput with multi-tuple configurations and proper amount of dummy requests. It also carefully splits the end-to-end latency into per-module latency budget to minimize the total serving cost for multi-DNN applications. Evaluation shows that \sysname outperforms the state of the art by $1.49$ to $2.37$ times in serving cost while satisfying the latency objectives. Additionally, compared to the optimal solution using brute force search, \sysname derives the lower bound of serving cost for $91.5\%$ workloads with millisecond level runtime.

\end{abstract}

\section{Introduction}
\label{sec:intro}
Deep neural network (DNN) has achieved the state of the art results in multiple fields~\cite{lecun2015deep}, such as computer vision, speech recognition and natural language processing. Therefore, the application developers are starting to build streaming applications using DNN models~\cite{shen2019nexus, hu2021scrooge, crankshaw2020inferline, crankshaw2017clipper, romero2021infaas}. These DNN-based applications usually have latency constraints to guarantee good user experiences~\cite{gunasekaran2022cocktail, wu2023graft, chen2024harmonybatch, peng2024tangram, dou2024emma}. For example, the traffic monitoring application~\cite{zhang2017live}, which extracts pedestrian and vehicle information from videos collected by the surveillance cameras, often expects results to be returned within a few milliseconds to provide timely supports when emergency occurs.

To satisfy the latency objective, the DNN-based applications are usually deployed in the inference system to leverage the plentiful GPU resources in the cloud~\cite{weng2022mlaas, zhang2022workload, wu2023transparent, wang2023proactive, xu2022igniter}. The inference system uses the classic \textit{client-server} architecture, where the client sends requests to the inference system to query the target DNN models with latency constraints. When providing services to the clients, the application developer is charged by the cloud provider according to the amount of computation resources used by the DNN-based application. Inference for DNN-based application accounts for nearly $90\%$ of computing costs in the cloud~\cite{aws_ec2}. Therefore, when running the inference system, the primary goal for the application developer is to find a proper DNN module configuration (\eg the batch size for DNN inference), so as to minimize the serving cost while satisfying the latency objective.


However, existing serving systems~\cite{shen2019nexus, hu2021scrooge, crankshaw2020inferline, crankshaw2017clipper} can \textit{not} minimize the serving cost due to three limitations: (1) \textit{Excessive module latency.} Existing systems dispatch requests among machines in individual request and form batched requests at the machine side, which has a relatively low batch collection rate and unnecessarily increases the module latency. (2) \textit{Low module throughput.} Existing systems only support a fixed amount of configurations for each module and have limited support for resource heterogeneity, leading to relatively low throughput under the latency budget. (3) \textit{Wasted latency budget.} Given the latency constraint for the entire application, existing systems either use quantized interval~\cite{shen2019nexus} or machine throughput~\cite{hu2021scrooge, crankshaw2020inferline} to split the latency into per-module latency budget, leading to sub-optimal result. Therefore, these limitations result in low module throughput and eventually a higher serving cost for existing systems.



In this paper, we argue that \textbf{a proper DNN inference system should dispatch requests with higher batch collection rate, support flexible module configurations and assign each module a decent latency budget}, so as to maximize module througput and thereby minimize the serving cost. Based on this, we design and implement a DNN inference system called \sysname, which provides cost-minimum DNN inference services with efficient request dispatching, module scheduling and latency splitting capability as follows.

First, \sysname designs a novel request dispatch method to minimize the latency of DNN modules by dispatching requests among machines in batched requests directly, which largely increases the batch collection rate and thereby reduces the module latency. By doing so, \sysname is able to select module configurations with larger throughput under the latency constraint, leading to a lower serving cost.


Second, \sysname leverages a novel module scheduling method to maximize module throughput. It uses a combination of batching and heterogeneous hardware to derive a multi-tuple configuration per module, which increases resource efficiency. Besides, for machines with small workload, \sysname provides dummy generator and latency reassigner to increase their cost efficiency, which further reduces the cost.


Third, \sysname proposes a novel latency splitting method to minimize the total cost for multi-DNN applications. It leverages the latency-cost efficiency to measure the amount of cost that each module configuration can save per unit of latency budget, and designs a heuristic to split the latency to minimize the total cost for the entire application. Besides, \sysname supports various algorithms to optimize the splitting results and thereby further minimize the cost.


\sysname is completed implemented as a containerized system deployed on a cluster of $16$ GPUs with a total of \texttt{23k} lines of Python code. Evaluation shows that \sysname is able to minimize the serving cost while satisfying the latency objective. Compared to \sysname, existing systems~\cite{shen2019nexus, hu2021scrooge, crankshaw2020inferline, crankshaw2017clipper} require an average of $49.3\%$ to $137.2\%$ extra serving cost. For certain workload, existing systems require up to $3.2$ times the cost of \sysname's. For each workload, we generate the optimal cost with brute force search and show that \sysname derives the optimal one for more than $91\%$ workloads with an average runtime of $5$ milliseconds. In the ablation study, we quantify the importance of \sysname's design choices.


\section{Background and Motivation}
\label{sec:background}
In this section, we start with the scheduling policies of existing serving systems, followed by three design dimensions that distinguish \sysname from prior work.



\begin{figure}[t]
    \centering
    \includegraphics[width=0.27\textwidth]{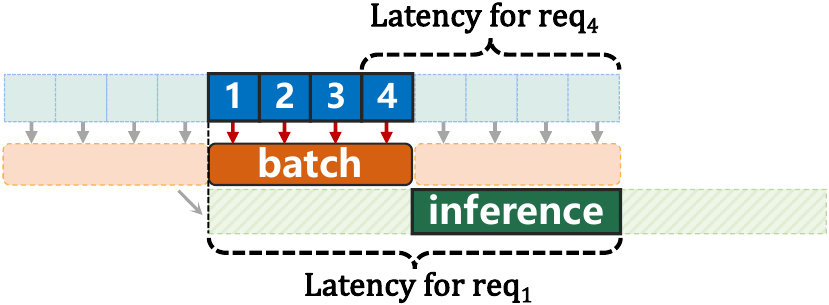}
    \caption{Requests in the same batch have varying latency.}
    \label{fig:wcl_illustration}
    \tight
\end{figure}

\begin{figure}[t]
    \centering
    \subfloat{\includegraphics[width=0.18\textwidth]{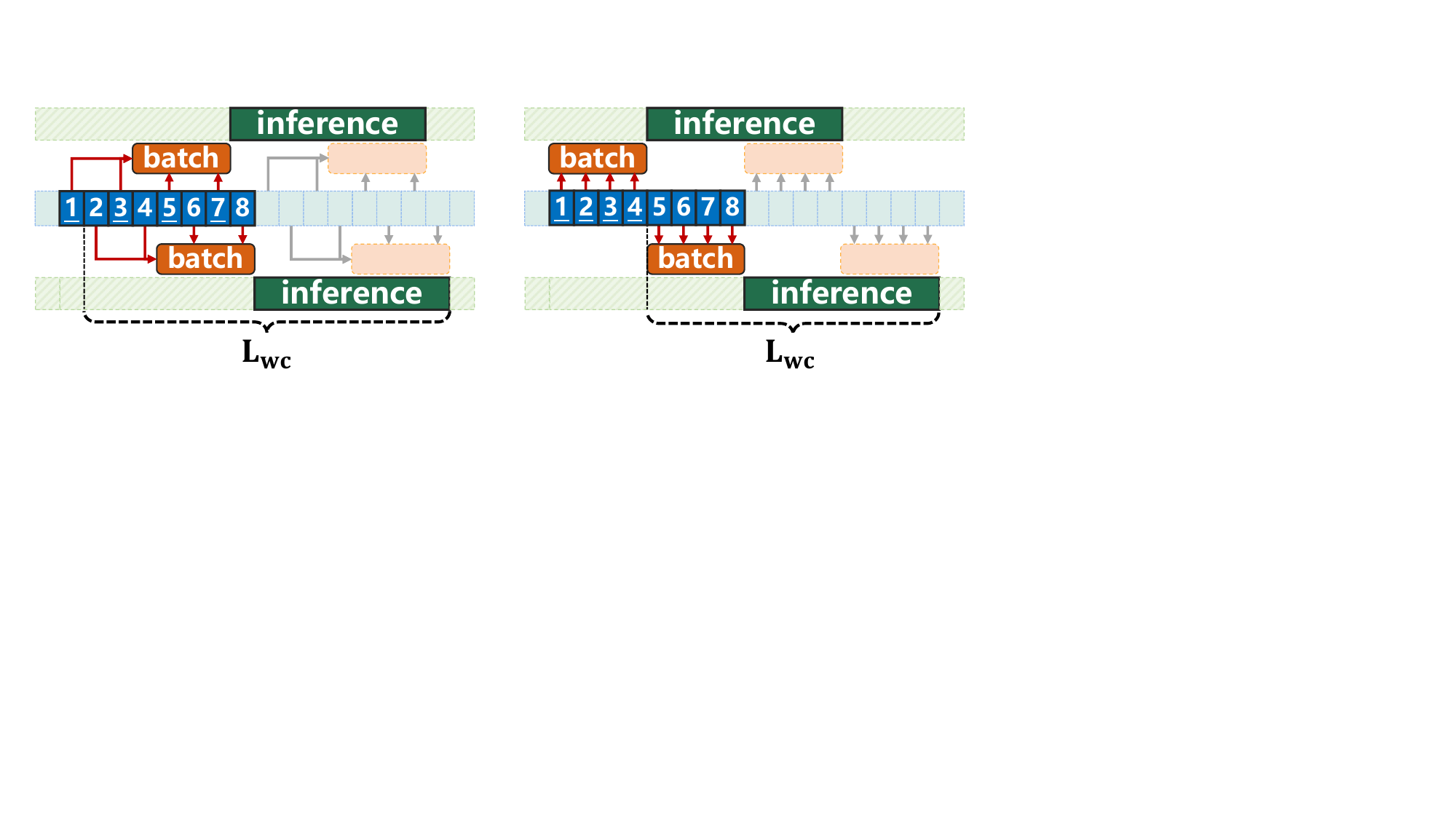}}
    \hspace{2em}
    \subfloat{\includegraphics[width=0.18\textwidth]{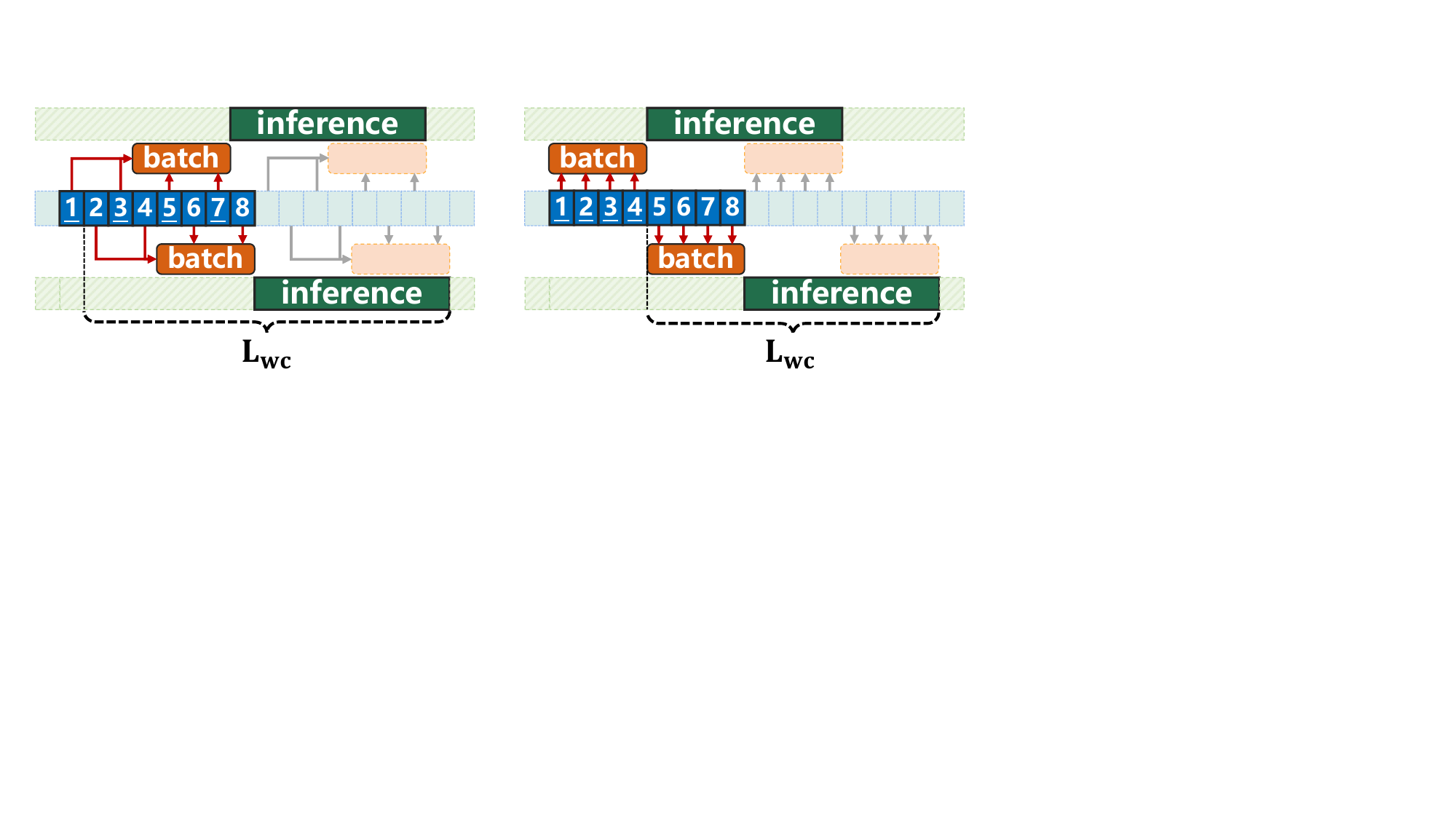}}
    \caption{(a) round-robin dispatch; (b) batch dispatch.}
    \label{fig:dispatch_problem}\tight
\end{figure}




\parab{How do existing systems decide module scheduling?} Two key factors affect module scheduling. First, batching is widely used to increase module throughput with efficient utilization of GPU's computation capability~\cite{crankshaw2017clipper, shen2019nexus, wu2023faasbatch}. Requests in the same batch experience different latency. For example, as shown in \figref{fig:wcl_illustration}, \codesub{req}{1} has a much larger latency than \codesub{req}{4}. The serving system defines the maximum latency of all requests as the worst case latency \codesub{L}{wc} and guarantees that \codesub{L}{wc} is within the target latency objective. Second, among the available computation hardware in the cloud, the most cost-efficient hardware is module dependent~\cite{crankshaw2020inferline, narayanan2020heterogeneity}. Therefore, the serving system needs to find the optimal module configuration (\eg batch size and computation hardware) to minimize the serving cost, while satisfying the latency objective.

To do so, existing systems~\cite{crankshaw2017clipper, shen2019nexus, crankshaw2020inferline, hu2021scrooge} often rely on a two-round heuristic: (1) it first estiamtes the worst case latency \codesub{L}{wc} of all candidate configurations, and (2) then greedily select the optimal one that maximize the throughput while its \codesub{L}{wc} is within the latency budget. For example, we consider a workload of module \codesub{M}{1} from \tabref{tab:batch_example} with request rate of $100$ req/sec and latency SLO of $0.4$ sec. Existing systems dispatches individual requests among machines to form batches in a round-robin fashion as shown in \figref{fig:dispatch_problem}(a), leading to a worst case latency two times of module duration (\ie $\codesub{L}{wc}=\texttt{2d}$)~\cite{shen2019nexus,hu2021scrooge}. Therefore, \codesub{L}{wc} for batch size of $2$, $4$ and $8$ will be $0.32$, $0.4$ and $0.64$ sec, and existing systems will select batch size of $4$, since it maximize module throughput while satisfying latency objective.



\begin{table}[t]
    \caption{
    Module profiles, where \texttt{\textup{b}} and \texttt{\textup{d}} are batch size and execution duration (sec), \texttt{\textup{t}=\textup{b}/\textup{d}} is module throughput (req/sec).}
    \label{tab:batch_example}
    \scriptsize
    \centering
    \begin{tabular}{ ccccccccc }
     \toprule
     \multicolumn{3}{c}{Module \codesub{M}{1}} &
     \multicolumn{3}{c}{Module \codesub{M}{2}} &
     \multicolumn{3}{c}{Module \codesub{M}{3}} \\ \cmidrule(lr){1-3}
     \cmidrule(lr){4-6}
     \cmidrule(lr){7-9}
     \texttt{b} & \texttt{d} & \texttt{t} & \texttt{b} & \texttt{d} & \texttt{t} & \texttt{b} & \texttt{d} & \texttt{t} \\
     \toprule
     2 & 0.160 & 12.5 & 2 & 0.125 & 16 & 2 & 0.100 & 20 \\
     4 & 0.200 & 20 & 4 & 0.160 & 25 & 8 & 0.250 & 32 \\
     8 & 0.320 & 25 & 8 & 0.250 & 32 & 32 & 0.800 & 40\\
     \bottomrule
    \end{tabular}
\end{table}





\parab{Request dispatching.} We argue that the request dispatching policy of existing systems cannot minimize the worst case latency \codesub{L}{wc}. As shown in \figref{fig:dispatch_problem}(a), existing systems~\cite{crankshaw2017clipper, shen2019nexus, crankshaw2020inferline, hu2021scrooge} dispatch requests one by one to each machine (\eg \codesub{req}{1,3,5,7}) and form batched request at the machine side, so each machine receives requests at a rate equivalent to \textit{its per-machine module throughput}, leading to $\codesub{L}{wc}=\texttt{2d}$.


However, if the serving system can dispatch requests among machines in batches directly, \codesub{L}{wc} can be largely reduced. As shown in \figref{fig:dispatch_problem}(b), if we dispatch a consecutive amount of requests equal to its batch size to each machine (\eg \codesub{req}{1-4}), each machine will receive requests at a rate equivalent to \textit{the total request rate}, leading to $\codesub{L}{wc}=\texttt{d}+\texttt{b/T}$, where \texttt{b}, \texttt{d} and \texttt{T} is the batch size, execution duration and total request rate.

By dispatching requests in batch, the serving system shifts the batch collection process from the machine to the frontend, which can potentially reduce the serving cost with a smaller \codesub{L}{wc}. For example, in the above example of \codesub{M}{1}, when dispatching in batches, \codesub{L}{wc} for batch size of $2,$ $4$ and $8$ will be $0.18$, $0.24$ and $0.4$ sec. Now the serving system can choose batch size $8$, which is infeasible for existing systems. Therefore, to handle $100$ req/sec, serving systems with batch-aware dispatch only require $\texttt{n}=\texttt{T}/\texttt{t}=100/25=4$ machines with batch size $8$, while existing ones with round-robin dispatch require $\texttt{n}=100/20=5$ machines with batch size $4$.

\parae{Key question \mynum{182}:} Since the incoming request rate doesn't have to be an integer multiple of the module throughput of any given configuration, each module is usually associated with multiple configurations, as we will discuss next. Therefore, for each module, \sysname needs to derive a proper request dispatch strategy across its configurations to minimize module's worst case latency.

\parab{Module scheduling.} We argue that the module scheduling policy of existing systems cannot maximize module throughput under the latency budget. As discussed, existing systems~\cite{crankshaw2017clipper, shen2019nexus, crankshaw2020inferline, hu2021scrooge} leverages a two-round greedy heuristic. 
The scheduler first chooses the optimal configuration \codesub{c}{opt}, which derives the largest module throughput \codesub{t}{opt} among all candidates while satisfying the latency objective. It allocates $\texttt{n}=\lfloor \texttt{T}/\codesub{t}{opt}\rfloor$ machines running at \codesub{c}{opt} for the majority workload $\codesub{T}{maj}=\texttt{n}\cdot \codesub{t}{opt}$. Then, the scheduler chooses another configuration \codesub{c}{res} for the residual workload $\codesub{T}{res}=\texttt{T}-\texttt{n}\cdot \codesub{t}{opt}$ and allocates additional machine accordingly. Therefore, existing systems usually generate a two-tuple $\langle \codesub{c}{opt},\codesub{c}{res}\rangle$ for each module.

\begin{table}[t]
    \setlength{\tabcolsep}{3pt}
    \scriptsize
    \centering
    \caption{Scheduling results and corresponding serving cost of four scheduling methods \textnormal{\codesub{S}{1} - \codesub{S}{4}} for \textnormal{\codesub{M}{3}}. Each scheduling result includes \textnormal{\texttt{K}} configurations, where the \textnormal{\texttt{i}th} configuration \textnormal{\codesub{T}{i} $(\codesub{n}{i}\otimes \codesub{b}{i})$} means that \textnormal{\codesub{n}{i}} machine with batch size of \textnormal{\codesub{b}{i}} will deal with request rate of \textnormal{\codesub{T}{i} req/sec}, and the cost \textnormal{$=\sum_1^\texttt{K}\codesub{n}{i}$.}}
    \label{tab:scheduling_example}
    \begin{tabular}{ccccc}
    \toprule
    Method & \codesub{S}{1} & \codesub{S}{2} & \codesub{S}{3} & \codesub{S}{4}\\
    \midrule
    Dispatch & round-robin & batch-aware & batch-aware & batch-aware \\
    \#Config & 2 & 2 & any & any \\
    Dummy & \ding{55} & \ding{55} & \ding{55} & \ding{51} \\
    \midrule
    Config & $192$ $(6.0\otimes 8)$ & $160$ $(4.0\otimes 32)$ & $160$ $(4.0\otimes 32)$ & $200$ $(5.0\otimes 32)$ \\
    & $\ \ \ \ 6$ $(0.3\otimes 2)$ & $38$ $(1.9\otimes 2)$ & $32$ $(1.0\otimes 8)$ & \\
    & & & $\ \ 6$ $(0.3\otimes 2)$ & \\
    \midrule
    Cost & 6.3 & 5.9 & 5.3 & 5.0 \\
    \bottomrule
    \end{tabular}
\end{table}


For example, assuming a workload of module \codesub{M}{3} from \tabref{tab:batch_example} with request rate of $198$ req/sec and latency SLO of $1.0$ sec, \tabref{tab:scheduling_example} shows the scheduling results and corresponding serving costs for various scheduling methods: (1) Baseline \codesub{S}{1}. Existing systems use round-robin dispatch and two-tuple configuration as \codesub{S}{1}, which requires $6$ machines at full capacity and $1$ machine at partial capacity, leading to a serving cost of $6.3$ machines. (2) \codesub{S}{1} to \codesub{S}{2}. A better solution \codesub{S}{2}, which dispatches batched requests, reduces the serving cost from $6.3$ to $5.9$ machines by reducing \codesub{L}{wc} of candidate configurations. (3) \codesub{S}{2} to \codesub{S}{3}. For certain workload, the serving cost can be further reduced with multi-tuple configurations. In the above example, \codesub{S}{3} keeps the majority workload unchanged and further splits the partial workload (\ie $38\rightarrow32+6$) to assign $2$ machines with batch size $8$ and $2$ respectively, reducing the serving cost from $5.9$ to $5.3$ machines. Existing systems can \textit{not} support multi-tuple configurations due to runtime complexity~\cite{shen2019nexus} or problem formulation~\cite{crankshaw2017clipper, hu2021scrooge, crankshaw2020inferline, romero2021infaas}. (4) \codesub{S}{3} to \codesub{S}{4}. We can surprisingly further reduce the serving cost by adding a proper amount of dummy requests. In the above example, if we add a dummy $2$ req/sec (\ie $198\rightarrow200$ req/sec), \codesub{S}{4} reduces the serving cost from $5.3$ to $5.0$ machines. Specifically, though the dummy requests will consume a certain amount of computation resources, the serving cost can be reduced by increasing the batch size of those machines assigned to the residual workload.

\parae{Key question \mynum{183}:} It is challenging to schedule modules in a complex configuration space. For example, naively adding dummy of $10$ req/sec will only add extra workload without any cost reduction. Therefore, \sysname needs to properly design module scheduling.






\parab{Latency splitting.} Recent applications are starting to use multiple DNN modules to improve performance. For such application, the serving system is responsible to split the end-to-end latency into per-module latency budget. We argue that existing systems cannot utilize the latency budget efficiently.

Deriving the optimal latency splitting strategy is known to be NP-Complete~\cite{shen2019nexus, hu2021scrooge, crankshaw2020inferline}. Therefore, existing systems mainly leverage two types of heuristics. The first one~\cite{shen2019nexus} divides the end-to-end latency into discretized intervals to reduce the search space. However, its optimality is decided by the interval length and its complexity is exponential to the number of modules and the reciprocal of interval length. The second one~\cite{hu2021scrooge, crankshaw2020inferline} allocates latency budget across modules greedily according to module throughput. However, it prefers modules with larger throughput without consideration for its efficiency on latency budget, leading to sub-optimal solutions (\secref{sec:eval}).





\parae{Key question \mynum{184}:} Utilizing the end-to-end latency budget is critical to minimize the total cost for multi-DNN application. Therefore, \sysname needs to design a proper latency splitting strategy to derive per-module latency budget.


\section{Design}
\label{sec:design}
In this section, we start with an overview of \sysname, followed by a detailed description of its components.
\vspace{-3pt}




\subsection{Overview}
\label{sec:system_overview}
\parab{Terminology.} Similar to existing works~\cite{shen2019nexus, hu2021scrooge}, we define all requests for the same DNN-based application as a session. Each session has a unique \textit{session id}, associated with (1) an application directed acyclic graph (DAG), where a node in the DAG represents a DNN module or processing module for GPU/CPU computation, and an edge represents computation dependency across nodes; (2) the request rate for each node in the DAG; and (3) a latency objective, which is the expected end-to-end latency for the entire application.

\begin{figure}[t]
    \centering
    \includegraphics[width=0.46\textwidth]{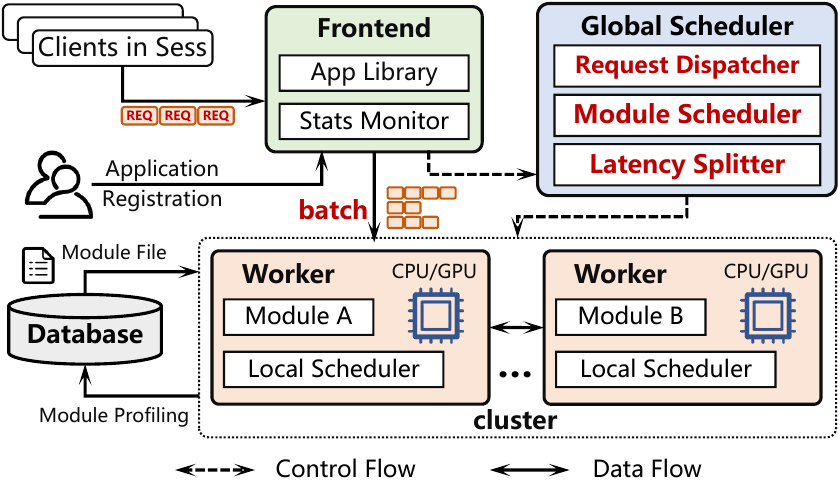}
    \caption{\sysname overview.}
    \label{fig:system_overview}
    \tight
\end{figure}


To capture the resource usage of modules, \sysname provides a profiling library in the shared database for each module, which includes the execution duration of the given module under various configurations (\eg batch size and computation hardware). Similar to existing works~\cite{crankshaw2017clipper, shen2019nexus, crankshaw2020inferline, hu2021scrooge, romero2021infaas, hu2021rim}, the profiling is collected offline once when the application is registered, and will \textit{not} affect the runtime latency of requests.

The cost is defined as the actual amount of computation resources used by the workload, according to frame-rate proportionality~\cite{hu2021scrooge}. Specifically, for a session with \texttt{m} modules, if the scheduler allocates \codesub{N}{i} machines to handle module \codesub{M}{i}, the session will have a cost of $\sum_{\texttt{i}=\texttt{1}}^\texttt{m}\sum_{\texttt{j}=\texttt{1}}^{\codesub{N}{i}}\codesub{p}{ij}\cdot\codesub{f}{ij}/\codesub{t}{ij}$, where \codesub{p}{ij}, \codesub{f}{ij} and \codesub{t}{ij} is the unit price, the assigned request rate and the module throughput for the \texttt{j}th machine of \codesub{M}{i}.






\parab{How \sysname minimizes the serving cost.} As shown in \figref{fig:system_overview}, for each session, the global scheduler decides the configuration of each module given its request rate and latency objective. Request dispatcher first derives the proper batch-aware dispatching strategy to minimize \codesub{L}{wc}. Module scheduler and latency splitter then iteratively update candidate module configuration and per-module latency budget to minimize the total serving cost for the entire multi-DNN application.

Next, we first describe \sysname's request dispatching method (\secref{sec:dispatching}), followed by its module scheduling method (\secref{sec:scheduling}), and finally the latency splitting method (\secref{sec:splitting}).

\subsection{Request Dispatching}
\label{sec:dispatching}
To serve DNN-based applications, the first step is to estimate the worst case latency \codesub{L}{wc} of all candidate configurations. As discussed, \codesub{L}{wc} is closely related to request dispatch policy, and existing ones cannot minimize \codesub{L}{wc}. \sysname addresses the issue with its novel request dispatcher.

\parab{Throughput-Cost Request Dispatcher.} Each candidate configuration is actually a set of configurations to handle a given module's majority and residual workload. For each set, \sysname will rank all machines according to their configuration's throughput-cost ratio \texttt{r}, which is the module throughput divided by the hardware unit price. For example, for an incoming workload of $8$ req/sec for module \codesub{M}{4}, one of the candidate configuration set includes three machines \texttt{A}, \texttt{B} and \texttt{C}, where both \texttt{A} and \texttt{B} use configuration of batch size $6$ with execution duration of $2.0$ sec, and \texttt{C} uses configuration of batch size $2$ with execution duration of $1.0$ sec. All three machines have the same hardware unit price of $1.0$. \sysname calculates their throughput-cost ratio as $\codesub{r}{A}=\codesub{r}{B}=\frac{\texttt{b}}{\texttt{d}}/\texttt{p}=\frac{6}{2.0}/1.0=3.0$ and $\codesub{r}{C}=\frac{2}{1.0}/1.0=2.0$, where \texttt{b}, \texttt{d} and \texttt{p} is the corresponding batch size, execution duration and hardware unit price. Therefore, \sysname will rank these three machines as $\codesub{r}{A}=\codesub{r}{B}>\codesub{r}{C}$.


\begin{figure}[t]
    \centering
    \includegraphics[width=0.48\textwidth]{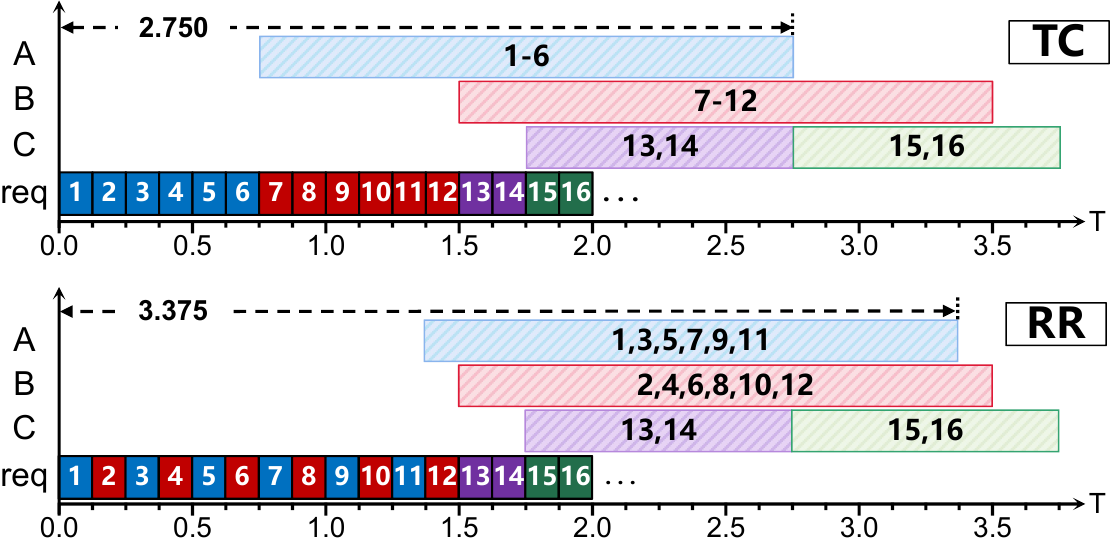}
    \caption{Request dispatching results for throughput-cost dispatch policy (top) and round-robin dispatch policy (bottom).}
    \label{fig:tc_dispatch}
\end{figure}


\sysname dispatches batched requests among machines in the order of throughput-cost ratio, where each machine will receive a successive amount of requests equal to its batch size in a row. For machines with the same throughput-cost ratio (\eg machines that use the same configuration such as \texttt{A} and \texttt{B} in the above example), \sysname will dispatch requests among them in batched requests in turn. On the contrary, existing systems~\cite{crankshaw2017clipper, shen2019nexus, hu2021scrooge} dispatch requests among machines as individual request in round robin and let each machine collect the batch locally. We call \sysname's dispatching strategy throughput-cost (TC) dispatch policy, as opposed to existing systems' round-robin (RR) dispatch policy. As discussed in \secref{sec:background}, TC dispatch policy ensures that each machine collects requests to form the batch at the rate of the whole workload, while RR dispatch policy can only collect requests at the rate of machine's own module throughput.

By accelerating the batch collection rate, the TC dispatch policy can achieve a lower worst case latency than the RR dispatch policy. For example, given the above example of $8$ req/sec for module \codesub{M}{4}, \figref{fig:tc_dispatch} shows the request dispatching results using TC and RR dispatch policy, respectively. For the first $16$ requests, the TC dispatch policy will dispatch \codesub{req}{1-6} to \texttt{A}, \codesub{req}{7-12} to \texttt{B} and \codesub{req}{13-16} to \texttt{C}, leading to a worst case latency of $2.75$ sec with $0.75$ sec for batch collection. Meanwhile, the RR dispatch policy will dispatch requests among \texttt{A} and \texttt{B} back and forth for \codesub{req}{1-12} without considering batch information, and then send \codesub{req}{13-16} to \texttt{C}, leading to a worst case latency of $3.375$ sec with $1.375$ sec for batch collection. 



For a module, given a configuration set of \texttt{n} machines ordered by throughput-cost ratio: $\codesub{r}{1}\ge\codesub{r}{2}\ge\cdots\ge\codesub{r}{n}$, we denote the configuration of machine \texttt{i} as \codesub{c}{i} and the assigned request rate for machine \texttt{i} as \codesub{f}{i}. The remaining workload for machine \texttt{i} is defined as $\codesub{w}{i}=\sum_{\texttt{j}=\texttt{1}}^\texttt{n}\codesub{s}{ij}\cdot\codesub{f}{j}$, where \codesub{s}{ij} is $1$ if $\codesub{r}{j}\le\codesub{r}{i}$ and is $0$ if $\codesub{r}{j}>\codesub{r}{i}$. Namely, \codesub{w}{i} represents the amount of request rate assigned to machines with throughput-cost ratio no larger than machine \texttt{i}'s. For example, in the above example of \codesub{M}{4}, the remaining workload for both \texttt{A} and \texttt{B} is $3+3+2=8$ req/sec, while the one for \texttt{C} is $2$ req/sec.


\begin{theorem}
Using TC dispatch, the worst case latency for machine \texttt{i} is $\codesub{L}{wc}(\texttt{i})=\codesub{d}{i}+\codesub{b}{i}/\codesub{w}{i}$, where \codesub{d}{i} and \codesub{b}{i} is the execution duration and batch size under configuration \codesub{c}{i}. The worst case latency for the module is $\max_{\texttt{i}=\texttt{1}}^\texttt{n}\codesub{L}{wc}(\texttt{i})$.
\label{theorem:wcl_bound}
\end{theorem}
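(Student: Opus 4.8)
The plan is to prove the per-machine formula first and then obtain the module-level claim as an immediate corollary. For the module-level part, observe that every request handled by the module is dispatched to exactly one of the \texttt{n} machines, so the module's worst case latency---the maximum latency taken over all of its requests---is by definition $\max_{\texttt{i}=\texttt{1}}^{\texttt{n}}\codesub{L}{wc}(\texttt{i})$. Hence the whole content of the theorem lies in establishing $\codesub{L}{wc}(\texttt{i})=\codesub{d}{i}+\codesub{b}{i}/\codesub{w}{i}$, which I would decompose into a batch-collection term and an execution term.

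First I would identify the worst case request on machine \texttt{i} as the one arriving at the start of a fresh batch window, since it must wait for the full batch of $\codesub{b}{i}$ requests to accumulate before execution can begin. Working with the stream at its rates (a fluid view), the latency of such a request is the batch-collection time plus the execution time $\codesub{d}{i}$, and I would argue no extra queueing delay is incurred: machine \texttt{i} is assigned rate $\codesub{f}{i}\le\codesub{t}{i}=\codesub{b}{i}/\codesub{d}{i}$, so it drains each batch in time $\codesub{d}{i}$ at least as fast as the next batch accumulates, leaving it free when collection completes. It then remains to pin down the collection time.

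The crux is showing the batch fills at effective rate $\codesub{w}{i}$, giving collection time $\codesub{b}{i}/\codesub{w}{i}$, and here I would exploit the priority structure of TC dispatch. Split the machines into those with ratio strictly above $\codesub{r}{i}$ and the tail group $S=\{\texttt{j}:\codesub{r}{j}\le\codesub{r}{i}\}$ containing machine \texttt{i}. Since TC dispatch serves machines in decreasing throughput-cost order, the higher-ratio machines are served first and siphon off total rate $\texttt{T}-\codesub{w}{i}=\sum_{\texttt{j}\notin S}\codesub{f}{j}$, so the residual stream feeding $S$ arrives at exactly $\codesub{w}{i}=\sum_{\texttt{j}\in S}\codesub{f}{j}$. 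Because $\codesub{r}{i}=\max_{\texttt{j}\in S}\codesub{r}{j}$, machine \texttt{i} is highest-priority within $S$ and therefore collects its batch directly from this residual stream at rate $\codesub{w}{i}$ (for machines tied at $\codesub{r}{i}$, the tied machines rotate, but each still draws the full residual stream during its own turn). Filling $\codesub{b}{i}$ slots at rate $\codesub{w}{i}$ takes $\codesub{b}{i}/\codesub{w}{i}$, and adding $\codesub{d}{i}$ yields the claim.

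The main obstacle I anticipate is making the ``effective rate $\codesub{w}{i}$'' argument rigorous for a discrete, interleaved request stream rather than the fluid approximation. One must argue that, however the higher-ratio and tied machines are scheduled around machine \texttt{i}, the worst case spacing of the requests that fill machine \texttt{i}'s batch corresponds to rate exactly $\codesub{w}{i}$, and that the off-by-one between $\codesub{b}{i}$ and $\codesub{b}{i}-1$ inter-arrival gaps resolves to $\codesub{b}{i}/\codesub{w}{i}$ in the worst case. I would anchor this by checking the two extremes---the top-ratio machine(s), where $\codesub{w}{i}=\texttt{T}$ and the batch is a contiguous prefix of the stream, and the bottom-ratio machine, where $\codesub{w}{i}=\codesub{f}{i}$---and then arguing monotonically over the intermediate groups. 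A secondary check is confirming the no-queueing step at the tight case $\codesub{f}{i}=\codesub{t}{i}$, where collection bursts at rate $\codesub{w}{i}>\codesub{f}{i}$ yet the per-batch inter-arrival time $\codesub{b}{i}/\codesub{f}{i}$ still covers $\codesub{d}{i}$, so the bound $\codesub{d}{i}+\codesub{b}{i}/\codesub{w}{i}$ remains valid (and is attained for the binding machine).
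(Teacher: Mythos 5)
Your proposal is correct and follows essentially the same route as the paper's proof: both identify machine \texttt{i}'s remaining workload $\codesub{w}{i}$ as its effective batch collection rate under TC dispatch, yielding a collection time of $\codesub{b}{i}/\codesub{w}{i}$, add the execution duration $\codesub{d}{i}$, and take the maximum over machines for the module-level claim. You simply unpack the priority-ordering argument (and the no-queueing check) in more detail than the paper, which asserts the effective-rate step in a single sentence at the same fluid level of rigor you flag as the remaining obstacle.
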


\begin{proof}
TC dispatch policy sends batched requests among machines in the order of throughput-cost ratio, so machine \texttt{i}'s remaining workload \codesub{w}{i} is its equivalent batch collection rate, leading to a batch collection time $\codesub{L}{col}=\codesub{b}{i}/\codesub{w}{i}$. Therefore, $\codesub{L}{wc}(\texttt{i})$ will be $\codesub{d}{i}+\codesub{b}{i}/\codesub{w}{i}$, and the module's worst case latency will be the maximum of each machine's $\codesub{L}{wc}(\texttt{i})$.
\end{proof}\tight



As shown in \theoremref{theorem:wcl_bound}, \sysname's TC dispatch can achieve a worst case latency of $\codesub{d}{i} + \codesub{b}{i} / \codesub{w}{i}$ for each machine \texttt{i}, which is smaller than existing systems' $\texttt{2}\codesub{d}{i}$ worst case latency using RR dispatch. By minimizing \codesub{L}{wc} for all candidate configurations, \sysname is able to choose configurations with larger throughput, which is infeasible for existing systems due to latency constraints, leading to a smaller serving cost (\secref{sec:eval}).

\subsection{Module Scheduling}
\label{sec:scheduling}
Next, the DNN serving system needs to schedule modules given its latency budget and request rate. As discussed, existing systems use two-tuple configuration and cannot achieve high module throughput. \sysname addresses the issue with its novel module scheduler and residual optimizer.






\begin{algorithm}[t]
\scriptsize
\caption{Generating the configuration set for module $M$}
\label{algo:bpc_generator}
\begin{algorithmic}[1]
\Function{GenerateConfig}{$T_M$, $L_M$, $P_M$}\label{ln:gc_define}
\State $\text{rw}\leftarrow T_M$, $\text{configs}\leftarrow []$\label{ln:rw_initial}
\State $k\leftarrow0$, $c\leftarrow P_M[k]=\langle b_k,s_k,d_k,t_k,hw_k,p_k\rangle$\label{ln:k_initial}
\While{$\text{rw}\ne0$}
    \If{$\text{GetWCL}(c)\le L_M$}\label{ln:check_slo}
        \State $n\leftarrow\text{rw}/t$
        \If{$n\ge1$}
            \State $\text{configs}\leftarrow\text{configs}\oplus\langle c,\lfloor n\rfloor\rangle$\label{ln:allocate_full}
            \State $\text{rw}\leftarrow\text{rw}$ - $\lfloor n\rfloor\times t$\label{ln:update_rw}
        \Else
            \State $\text{configs}\leftarrow\text{configs}\oplus\langle c,n\rangle$\label{ln:allocate_partial}
            \State $\text{rw}\leftarrow0$\label{ln:zero_rw}
        \EndIf
    \Else
        \While{$\text{GetWCL}(c)>L_M$}\label{ln:find_next_1}
            \State $c\leftarrow P_M[\text{++}k]$\label{ln:find_next_2}
            \If{$k\ge\text{len}(P_M)$}\label{ln:cant_find_1}
                \State \Return False, []\label{ln:cant_find_2}
            \EndIf
        \EndWhile
    \EndIf
\EndWhile
\State \Return True, configs
\EndFunction
\end{algorithmic}
\end{algorithm}



\parab{Module Scheduler.} Given the worst case latency of all candidate configurations, \sysname derives the multi-tuple configuration set for module \texttt{M} using \algoref{algo:bpc_generator}, where the input is the request rate \codesub{T}{M}, the latency budget \codesub{L}{M} and the profiling \codesub{P}{M} for all candidate configurations ordered by throughput-cost ratio (\lineref{ln:gc_define}). \sysname maintains the current unallocated workload \texttt{rw}, starting at \codesub{T}{M} (\lineref{ln:rw_initial}). The configuration index \texttt{k}, starting at $0$, records the current configuration \texttt{c} (\lineref{ln:k_initial}). Function \texttt{GetWCL()} estimates \codesub{L}{wc} as described in \theoremref{theorem:wcl_bound}. \algoref{algo:bpc_generator} generates the configuration set greedily. If the current configuration \texttt{c} can satisfy the latency budget (\lineref{ln:check_slo}), \sysname will use \texttt{c} to update \texttt{configs}. When the number of machines \texttt{n} is larger than $1$, $\lfloor\texttt{n}\rfloor$ machines will run at \texttt{c} at full capacity (\lineref{ln:allocate_full}) and corresponding amount of workload will be deducted from \texttt{rw} (\lineref{ln:update_rw}). Otherwise, one machine will run at \texttt{c} at partial capacity of \texttt{n} (\lineref{ln:allocate_partial}) and set \texttt{rw} to $0$ (\lineref{ln:zero_rw}). If \texttt{c} can \textit{not} satisfy the latency objective, \sysname will find the next feasible configuration (\multilineref{ln:find_next_1}{ln:find_next_2}).






\parab{Optimizing Residual Workload.} \sysname leverages its dummy generator and latency reassigner to increase module throughput for residual workload as follows.

\parae{Dummy generator} adds a proper amount of dummy requests to the given module. Given a module with \texttt{K} distinct module configurations ordered by throughput-cost ratio: $\frac{\codesub{t}{1}}{\codesub{p}{1}}>\frac{\codesub{t}{2}}{\codesub{p}{2}}>\cdots>\frac{\codesub{t}{K}}{\codesub{p}{K}}$, where the \texttt{i}th module configuration \codesub{c}{i} with module throughput \codesub{t}{i} and unit price \codesub{p}{i} is assigned \codesub{n}{i} machines, \sysname defines the \textit{leftover workload} for \codesub{c}{i} as $\codesub{u}{i}=\sum_{\texttt{j=i+1}}^\texttt{K}\codesub{n}{j}\codesub{t}{j}$. Different from the remaining workload defined in \secref{sec:dispatching}, the leftover workload represents the total amount of requests assigned to machines with throughput-cost ratio less than the given module configuration's.

\begin{theorem}
In the cost-minimum configuration, the leftover workload \codesub{u}{i} for the \texttt{i}th configuration \codesub{c}{i} ordered by throughput-cost ratio should be less than its throughput \codesub{t}{i}. Namely, $\forall \texttt{i} \in \mathbb{K}, \codesub{u}{i} < \codesub{t}{i}$.
\label{theorem:leftover_workload}
\end{theorem}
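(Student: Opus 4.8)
The plan is to argue by contradiction via a local exchange that promotes workload to the more cost-efficient configuration \codesub{c}{i}. Suppose that in some cost-minimum configuration there is an index \texttt{i} that violates the bound, i.e. $\codesub{u}{i}\ge\codesub{t}{i}$. Recall that the cost of a configuration is $\sum_\texttt{i}\codesub{n}{i}\codesub{p}{i}$ and that the (fractional) machine count and served workload of \codesub{c}{i} satisfy (workload on \codesub{c}{i})$=\codesub{n}{i}\codesub{t}{i}$, so the price per unit of served workload at \codesub{c}{i} is exactly $\codesub{p}{i}/\codesub{t}{i}$. Because the configurations are ordered by throughput-cost ratio, every \codesub{c}{j} with $\texttt{j}>\texttt{i}$ obeys $\codesub{t}{j}/\codesub{p}{j}<\codesub{t}{i}/\codesub{p}{i}$, equivalently $\codesub{p}{j}/\codesub{t}{j}>\codesub{p}{i}/\codesub{t}{i}$; that is, every configuration in the leftover group is strictly less cost-efficient per unit of served workload than \codesub{c}{i}.

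The exchange I would perform is to add one full machine running \codesub{c}{i} while simultaneously returning a total of \codesub{t}{i} units of workload from the configurations with index $>\texttt{i}$. This is feasible for the workload balance precisely because $\codesub{u}{i}\ge\codesub{t}{i}$ leaves at least \codesub{t}{i} units to give back, and the module's total workload is conserved. Writing the removed amounts as $\delta_\texttt{j}\ge0$ with $\sum_{\texttt{j}>\texttt{i}}\delta_\texttt{j}=\codesub{t}{i}$, the cost change is
\[
\Delta \;=\; \codesub{p}{i} - \sum_{\texttt{j}>\texttt{i}}\delta_\texttt{j}\,\frac{\codesub{p}{j}}{\codesub{t}{j}}
\;<\; \codesub{p}{i} - \frac{\codesub{p}{i}}{\codesub{t}{i}}\sum_{\texttt{j}>\texttt{i}}\delta_\texttt{j}
\;=\; \codesub{p}{i}-\codesub{p}{i} \;=\; 0,
\]
where the strict inequality uses $\codesub{p}{j}/\codesub{t}{j}>\codesub{p}{i}/\codesub{t}{i}$ for every \texttt{j} in the sum. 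Thus the modified configuration is strictly cheaper, which contradicts cost-minimality once feasibility is confirmed.

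The easy half of feasibility is that the worst case latency of \codesub{c}{i} is unchanged. By \theoremref{theorem:wcl_bound}, $\codesub{L}{wc}(\texttt{i})$ is governed by the remaining workload of the \codesub{c}{i} machines, which equals the total workload on all configurations of ratio no larger than \codesub{c}{i}'s, namely $\codesub{n}{i}\codesub{t}{i}+\codesub{u}{i}$. Adding one \codesub{c}{i} machine raises this by \codesub{t}{i} while the returns lower it by \codesub{t}{i}, so it is invariant and $\codesub{L}{wc}(\texttt{i})\le\codesub{L}{M}$ still holds; the configurations of higher ratio are untouched, so their remaining workloads and latencies are unchanged as well.

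The hard part, and the main obstacle, is the latency of the configurations from which workload is withdrawn: since $\codesub{L}{wc}(\texttt{j})=\codesub{d}{j}+\codesub{b}{j}/\codesub{w}{j}$ grows as the remaining workload \codesub{w}{j} shrinks, a careless withdrawal can push some $\codesub{L}{wc}(\texttt{j})$ above \codesub{L}{M}. I would control this by choosing \emph{where} to return the \codesub{t}{i} units: withdrawing from the highest-ratio configuration of the leftover group (index $\texttt{i}+1$) affects only that configuration's remaining-workload sum, since every lower-ratio configuration excludes it from its own sum; iterating from the top of the leftover group localizes each latency increase to a single configuration, and whenever a configuration cannot absorb its reduction within \codesub{L}{M} I would drop it entirely and re-invoke the feasible greedy allocation of \algoref{algo:bpc_generator} on the reduced residual. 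Proving that such a localized redistribution can always return exactly \codesub{t}{i} units while keeping every $\codesub{L}{wc}(\texttt{j})\le\codesub{L}{M}$ is the crux; the cost bound above then guarantees that any feasibility-preserving redistribution yields a strictly cheaper configuration, completing the contradiction and hence the bound $\codesub{u}{i}<\codesub{t}{i}$.
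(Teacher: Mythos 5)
Your overall strategy---contradiction via an exchange that promotes $\codesub{t}{i}$ units of leftover workload onto one new machine at \codesub{c}{i}---is the same as the paper's, and your cost accounting is arguably cleaner: charging each donor configuration $\codesub{p}{j}/\codesub{t}{j}$ per unit of withdrawn workload and invoking the throughput-cost ordering yields the strict inequality $\Delta<0$ in one line, whereas the paper aggregates the donors into an averaged configuration $(p_\alpha,t_\alpha)$ and compares costs block by block over $T_1=\lfloor u_k/t_k\rfloor t_k$. Note that your $\Delta$ implicitly assumes the donors' cost decreases linearly as $\delta_\texttt{j}\codesub{p}{j}/\codesub{t}{j}$, i.e., that the donor machines keep their \emph{original} configurations at reduced partial capacity; this is consistent with the paper's frame-rate-proportional cost model, but only so long as those configurations remain latency-feasible after the withdrawal.

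That is exactly where the genuine gap sits, and you correctly flag it yourself: withdrawing workload shrinks the donors' remaining workloads $\codesub{w}{j}$, inflates their collection times $\codesub{b}{j}/\codesub{w}{j}$, and may push $\codesub{L}{wc}(\texttt{j})$ past \codesub{L}{M}; the displaced workload must then move to some fallback configuration $c_{\beta}$, and nothing in your argument bounds $p_{\beta}/t_{\beta}$, so the penalty could in principle erase the savings $\codesub{p}{i}-\sum_\texttt{j}\delta_\texttt{j}\codesub{p}{j}/\codesub{t}{j}$. Your proposed repair (withdraw from the top of the leftover group to localize the latency damage, then re-run \algoref{algo:bpc_generator}) is only a sketch, so the contradiction is not closed. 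You should know, however, that the paper's own proof does not close it rigorously either: its Case 2 treats precisely this situation and concludes $C_0-C_2>0$ only by combining $T_1>u_k-T_1$ with the assertion that ``for most modules'' $\frac{p_\alpha}{t_\alpha}-\frac{p_k}{t_k}\approx\frac{p_\beta}{t_\beta}-\frac{p_\alpha}{t_\alpha}$. So the step you identify as the crux is also the weak point of the published argument; neither proof is complete at theorem-level rigor without an extra assumption bounding how much less efficient the fallback configuration can be relative to the original donors.
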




\begin{proof}
We use proof of contradiction, where $\exists k \in \mathbb{K}, u_k=\sum_{j=k+1}^K n_jt_j \ge t_k$. For $u_k$, we define the total number of machine as $n_{\alpha}=\sum_{j=k+1}^K n_j$, the average unit price as $p_{\alpha}=\sum_{j=k+1}^K n_jp_j/n_{\alpha}$, and the average throughput as $t_{\alpha}=u_k/n_{\alpha}$. The total cost for $u_k$ is $C_0=\sum_{j=k+1}^K n_jp_j$. Since $u_k\ge t_k$, we allocate a new machine at $c_k$ to handle $t_k$ workload for higher throughput-cost. Depending on the worst case latency for the remaining $u_k-t_k$ workload, there will be two cases.

In the first case, the original configuration for $u_k-t_k$ can still satisfy the latency constraint, then \sysname will switch $t_k$ amount of workload from its original configuration to $c_k$ for higher throughput-cost efficiency and keep $u_k-t_k$ workload unchanged. We define $T_1=\lfloor u_k/t_k\rfloor\cdot t_k$, then the new total cost for $u_k$ is $C_1=p_k\cdot\lfloor u_k/t_k\rfloor+p_{\alpha}\cdot(u_k-T_1)/t_{\alpha}$. We have $C_0-C_1=T_1(p_{\alpha}/t_{\alpha}-p_k/t_k)$. According to definition, $t_k/p_k>t_{\alpha}/p_{\alpha}$, so $C_0-C_1>0$. Namely, the cost-minimum assumption is violated.

In the second case, the original configuration for $u_k-t_k$ can not satisfy the latency constraint, then \sysname will choose a new configuration $c_{\beta}$ to deal with the remaining $u_k-t_k$ workload to guarantee the latency SLO. Similarly, the new total cost for $u_k$ is $C_2=p_k\cdot \lfloor u_k/t_k\rfloor+p_{\beta}\cdot(u_k-T_1)/t_{\beta}$. We have $C_0-C_2=T_1(\frac{p_{\alpha}}{t_{\alpha}}-\frac{p_k}{t_k})-(u_k-T_1)(\frac{p_{\beta}}{t_{\beta}}-\frac{p_{\alpha}}{t_{\alpha}})$. According to definition, $T_1>u_k-T_1$, and for most modules, $\frac{p_{\alpha}}{t_{\alpha}}-\frac{p_k}{t_k} \approx \frac{p_{\beta}}{t_{\beta}}-\frac{p_{\alpha}}{t_{\alpha}}$, so $C_0-C2>0$. Namely, the cost-minimum assumption is violated.
\end{proof}\tight


Given \theoremref{theorem:leftover_workload}, configuration \codesub{c}{i}'s leftover workload requires up to one extra machine at \codesub{c}{i}. Therefore, for a given module \texttt{M} with latency budget \codesub{L}{M}, \sysname determines whether adding dummy request of $\codesub{dum}{i}=\codesub{t}{i}-\codesub{u}{i}$ can reduce the serving cost. For example, in the example of \codesub{M}{3} with request rate of $198$ req/sec in \secref{sec:background}, according to the definition, \codesub{u}{i} for module configuration with batch size $32$ should be $\codesub{u}{i}=32+6=38$ req/sec. A dummy request of $\codesub{dum}{i}=40-38=2$ req/sec will reduce the serving cost to $5.0$ machines, corresponding to scheduling result \codesub{S}{4} in \tabref{tab:scheduling_example}.

\parae{Latency reassigner} reassigns the remaining latency budget to module's residual workload. \sysname derives the configurations using \algoref{algo:bpc_generator}. However, there can still be a gap between the worst case latency and the latency budget, which can be used to further reduce the serving cost when reassigned properly. Intuitively, this latency gap will \textit{not} benefit the majority configuration for either module, otherwise \algoref{algo:bpc_generator} should output another configuration result in the first place. Instead, reassigning the latency gap to module's residual workload can potentially increase its throughput. \sysname re-runs \algoref{algo:bpc_generator} with the updated latency budget for residual workload, and updates the configuration set if doing so can reduce the serving cost.




The intuition behind these two methods to increase the throughput for residual workload is associated with the latency constraint: $\texttt{d}+\texttt{b}/\texttt{w}\le \codesub{L}{M}$. Adding dummy requests is equivalent to increasing the remaining workload \texttt{w}, while reassigning the latency budget is equivalent to increasing the latency budget \codesub{L}{M} for the residual workload, both of which enable \sysname to choose configurations with larger throughput for the residual workload, leading to a lower serving cost (\secref{sec:eval}).

\subsection{Latency Splitting}
\label{sec:splitting}
Latency splitting derives the per-module latency budget. As discussed, existing ones often lead to sub-optimal results. \sysname provides heuristics based on latency-cost efficiency to gradually allocate latency budget across modules and supports splitting optimizer to further reduce the total cost.

\parab{Latency Splitter.} \sysname defines the latency-cost efficiency \texttt{LC} as the amount of cost that a given module configuration can reduce per unit of latency budget when switching from the previous module configuration \codesub{c}{prev} to the new one \codesub{c}{new}. Specifically, for a module with request rate of \texttt{T}, $\texttt{LC}=\frac{\codesub{C}{M}(\texttt{prev})-\codesub{C}{M}(\texttt{new})}{\codesub{L}{wc}(\texttt{new})-\codesub{L}{wc}(\texttt{prev})}$, where $\codesub{C}{M}(*)=\texttt{p}_*\times \texttt{T}/\texttt{t}_*$ and $\codesub{L}{wc}(*)$ is the serving cost and worst case latency for module \texttt{M} under the previous/new configuration. For example, for module \codesub{M}{1} from \tabref{tab:batch_example}, we assume that the previous configuration is batch size of $2$ to deal with a workload of $\texttt{T}=100$ req/sec and that $\codesub{p}{*}$ is $1.0$. The other two configurations with batch size of $4$ and $8$ are the candidates to switch to. According to the definition, the latency-cost efficiency for configuration with batch size of $4$ is $\frac{1.0\times 100/12.5-1.0\times 100/20}{(0.2+4/100)-(0.16+2/100)}=50.0$, while the one for configuration with batch size of $8$ is $18.2$.



\begin{algorithm}[t]
\scriptsize
\caption{Deriving per-module latency budget}
\label{algo:latency_splitter}
\begin{algorithmic}[1]
\Function{SplitLatency}{$T$, $L$, $P$}\label{ln:sl_define}
\State $\text{DAG}\leftarrow\text{GetDefaultDAG()}$\label{ln:default_dag}
\State $\text{flag}\leftarrow\text{True}$
\While{flag}\label{ln:iter_begin}
    \State $\text{flag}\leftarrow\text{False}$, $LC_{max}\leftarrow-\infty$
    \State $M_{max}\leftarrow\text{NULL}$, $c_{max}\leftarrow\text{NULL}$
    \For{$M$ in DAG}
        \State $T_M\leftarrow T[M]$, $P_M\leftarrow P[M]$
        \State $c_{prev}\leftarrow\text{DAG}[M]$
        \For{$c_{new}$ in $P_M$}
            \State $LC\leftarrow\frac{p_{prev}\times T_M/t_{prev} - p_{new}\times T_M/t_{new}}{\text{GetWCL}(c_{new}) - \text{GetWCL}(c_{prev})}$\label{ln:lc_def}
            \If{$LC>LC_{max}$}
                \If{GetLat(DAG, $M$, $c_{new}$) $\le L$}\label{ln:new_wcl_constraint}
                    \State $flag\leftarrow\text{True}$, $LC_{max}\leftarrow LC$
                    \State $M_{max}\leftarrow M$, $c_{max}\leftarrow c_{new}$\label{ln:iter_end}
                \EndIf
            \EndIf
        \EndFor
    \EndFor
    \If{$flag$}
        \State updateDAG($DAG$, $M_{max}$, $c_{max}$)\label{ln:get_lc_max}
    \EndIf
\EndWhile
\State \Return $[L_M$ for $M$ in DAG$]$
\EndFunction
\end{algorithmic}
\end{algorithm}



Based on \texttt{LC}, \sysname uses \algoref{algo:latency_splitter} to partition the latency into per-module latency budget. It uses \texttt{DAG} to store configurations for each module, starting at a default \texttt{DAG} (\lineref{ln:default_dag}), where each module has batch size of $1$ on hardware with the highest unit price, corresponding to the least cost-efficient configuration. Function \texttt{GetLat(DAG, M, c)} calculates the end-to-end latency when replacing module \texttt{M} in \texttt{DAG} with configuration \texttt{c} and keeping the rest of modules unchanged. Among all configurations of all modules, \algoref{algo:latency_splitter} chooses the one whose \texttt{LC} (\lineref{ln:lc_def}) is the maximum, while the latency constraint can be satisfied (\lineref{ln:new_wcl_constraint}), and then updates the configuration to the new one (\lineref{ln:get_lc_max}). \algoref{algo:latency_splitter} repeats the above process until no configuration can be updated while preserving the latency constraint.


In each iteration (\multilineref{ln:iter_begin}{ln:get_lc_max}), \sysname prefers configurations with the largest latency-cost efficiency over the one with the largest throughput, which does \textit{not} mean that \sysname sacrifices throughput. Instead, by preferring latency-cost efficiency over throughput, \sysname is able to gradually allocate the latency among modules in multiple iterations, which is more likely to achieve global optimal. On the contrary, throughput-based methods recklessly allocate the latency, which can easily fall into local optimal (\secref{sec:eval}).

\begin{table*}[t]
  \caption{Comparison of Serving Systems}
  \label{tab:eval_comparison}
  \scriptsize
  \centering
  \begin{tabular}{|c|c|c|c|c|c|c|c|}
    \hline
    System & Worst Case Latency & Num of Configurations & Batch & Hetero & Residual Optimization & Latency Split & Split Optimization\\
    \hline
    \sysname & $\texttt{d}+\texttt{b}/\texttt{w}$ & any & \checkmark & \checkmark & dummy + reassign & latency-cost efficiency & merge + direct\\
    \hline
    Nexus~\cite{shen2019nexus} & \texttt{2d} & $2$ & \checkmark & & & quantized interval & \\
    \hline
    Scrooge~\cite{hu2021scrooge} & $\texttt{d}+\texttt{b}/\texttt{t}$ & $2$ & \checkmark & \checkmark & & throughput-based & \\
    \hline
    InferLine~\cite{crankshaw2020inferline} & \texttt{2d} & $1$ & \checkmark & \checkmark & & throughput-based & \\
    \hline
    Clipper~\cite{crankshaw2017clipper} & \texttt{2d} & $1$ & \checkmark & & & evenly splitting & \\
    \hline
  \end{tabular}\tight
\end{table*}

\parab{Optimizing the Splitting Result.} \sysname supports two methods to further optimize the splitting process.


\parae{Node merger} merges modules sharing the same parent and children modules into a super-module and calculates their \texttt{LC} as a whole. The intuition is that modules sharing the same parent and children should have the same latency budget, so the latency splitting process should consider them as a whole. For example, we consider an application with three modules \codesub{M}{x}, \codesub{M}{y} and \codesub{M}{z}, where the output of \codesub{M}{x} is the input to \codesub{M}{y} and \codesub{M}{z}. We assume the latency budget is only sufficient to update one module and \texttt{LC} to update these three modules is $20$, $10$ and $15$. Using \algoref{algo:latency_splitter}, \sysname will update \codesub{M}{x} and keep \codesub{M}{y} and \codesub{M}{z} unchanged. However, if we merge \codesub{M}{y} and \codesub{M}{z} together, their total \texttt{LC} exceeds \codesub{M}{x}'s (\eg $10+15>20$), leading to a lower total cost.




\parae{Cost-direct} reverses \algoref{algo:latency_splitter}'s final \texttt{R} iterations and greedily selects the candidate operation that reduces the most cost. The intuition is that during the last iterations, the remaining latency budget is small, so using cost to select operation directly can achieve a lower cost. For example, we consider an application with two modules \codesub{M}{x} and \codesub{M}{y} in sequence, where the remaining latency budget is only sufficient to update one module. We assume that the candidate operation for \codesub{M}{x} will consume $0.01$ sec of latency budget with \texttt{LC} of 20, while the one for \codesub{M}{y} will consume $0.1$ sec of latency budget with \texttt{LC} of $10$. Updating \codesub{M}{y}, though having a smaller \texttt{LC} (\ie $10<20$), can further reduce the total cost (\ie $0.1\times 10>0.01\times 20$).







\section{Evaluation}
\label{sec:eval}

We compare \sysname against existing systems in terms of its cost minimization capability under latency constraints, and then perform an ablation study to quantify the importance of \sysname's design decisions on cost reduction.

\subsection{Methodology}
\parab{Implementation.} We have implemented all the features of \sysname described in \secref{sec:design} with roughly \texttt{8k} lines of Python code for its core functionality, and an additional \texttt{15k} lines for the app library. Each component of \sysname runs in container for resource isolation and elastic scalability. \sysname supports DNN modules trained by various frameworks including Tensorflow~\cite{abadi2016tensorflow} and PyTorch~\cite{paszke2019pytorch}. \sysname is deployed on a cluster with $8$ P100 GPUs and $8$ V100 GPUs to demonstrate its capability to minimize the cost for heterogeneous hardware. We provide extensible APIs to register new applications with less than 20 lines of code without modifying the code.


\parab{Workloads.} For fair comparison, we choose five multi-DNN applications used in existing systems~\cite{shen2019nexus, hu2021scrooge, crankshaw2017clipper, crankshaw2020inferline, romero2021infaas}: \texttt{traffic}~\cite{zhang2017live} uses variants of SSD~\cite{liu2016ssd} to detect vehicle and pedestrian in traffic video, \texttt{face} uses PRNet~\cite{feng2018joint} to detect facial keypoints, \texttt{pose} uses OpenPose~\cite{cao2017realtime} to recognize human poses, \texttt{caption} uses S2VT~\cite{venugopalan2015sequence} to generate text description of video streams and \texttt{actdet}~\cite{liu2019caesar} detects human activities across camera footages. To demonstrate \sysname's capability to minimize serving cost, we synthesize a total of \evalnum workloads of the above five multi-DNN application using public video streams~\cite{shen2019nexus, venugopalan2015sequence, liu2019caesar, earthcam}. 




\begin{figure}[t]
    \centering
    \subfloat{\includegraphics[width=0.23\textwidth]{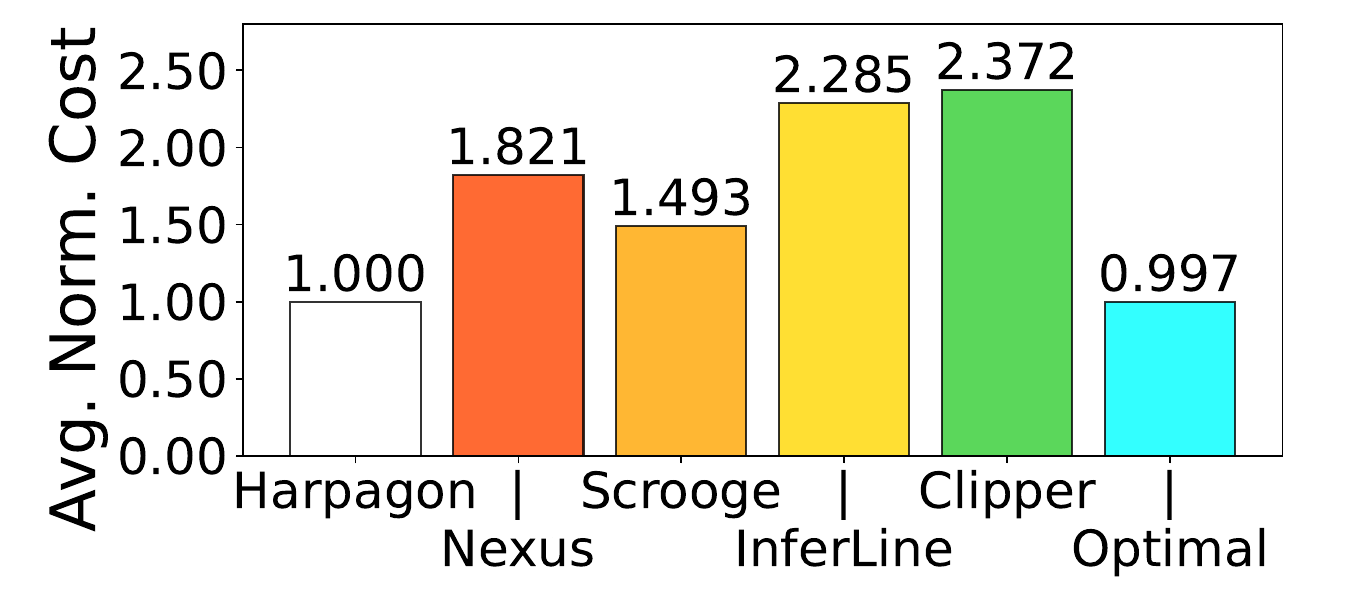}}
    \subfloat{\includegraphics[width=0.23\textwidth]{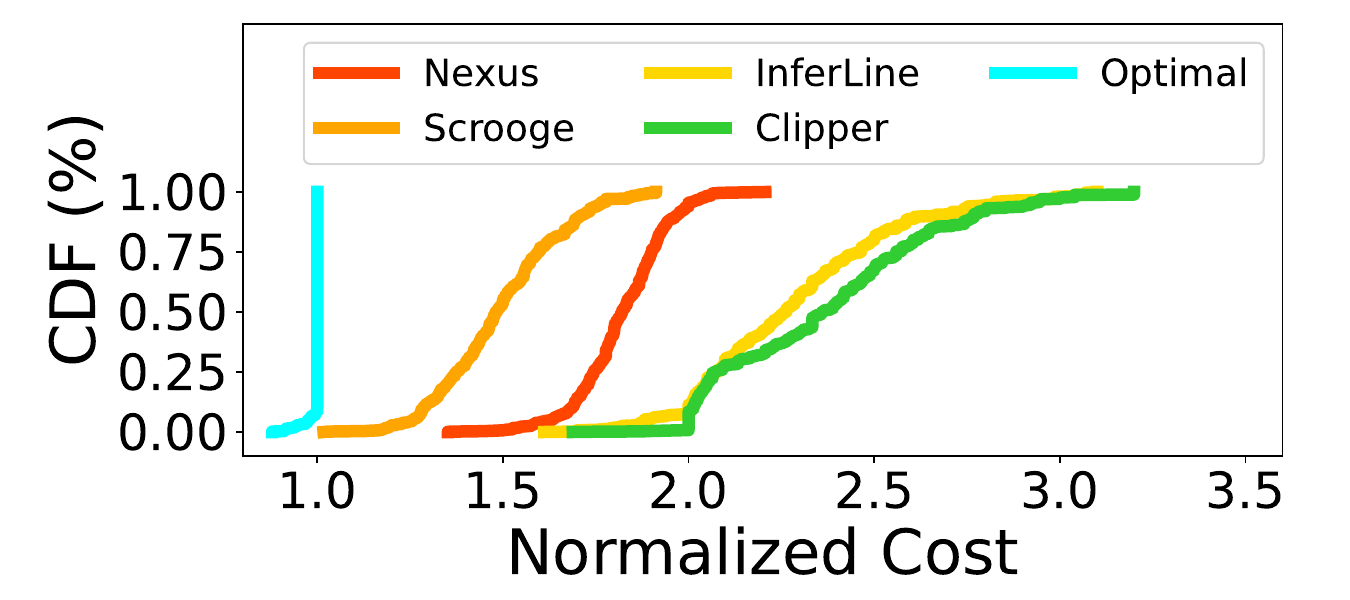}}
    \caption{(a) The average normalized cost for \sysname, four baseline systems and the optimal solution under \evalnum workloads; (b) CDF of normalized cost.}
    \label{fig:macro_result_combined}\tight
\end{figure}

\parab{Comparison Baselines.} We compare \sysname against four open-sourced baseline systems: Nexus~\cite{shen2019nexus}, Scrooge~\cite{hu2021scrooge}, InferLine~\cite{crankshaw2020inferline} and Clipper~\cite{crankshaw2017clipper}, as well as the optimal solution derived from brute force search. Similar to \sysname, all four baseline systems orchestrate DNN-based applications to minimize the cost while satisfying latency objectives. To serve multi-DNN applications, Nexus uses quantized interval to split latency across modules, Scrooge and InferLine leverage module throughput to select configuration for each module. Clipper has limited support for multi-DNN application, so we split latency across modules equally as in \cite{shen2019nexus, hu2021scrooge}.


\parab{Metrics.} We define \textit{the normalized cost} of a system as the ratio of the system's serving cost over \sysname's. The ideal normalized cost should be as small as possible.


\begin{figure*}
    \centering
    \includegraphics[width=0.98\linewidth]{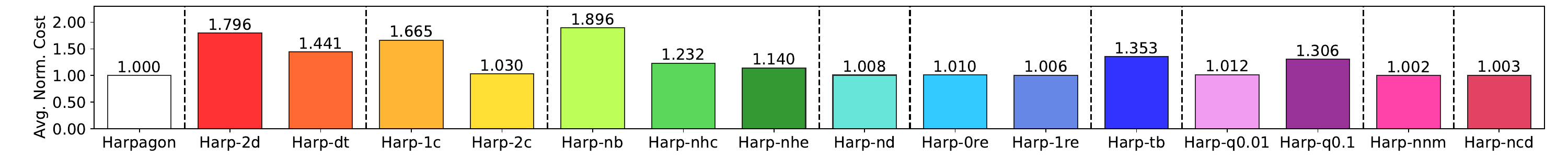}
    \caption{The average normalized cost for \sysname and baseline \sysname without corresponding functionalities.}
    \label{fig:ablation_all_a}\tight
\end{figure*}

\subsection{Comparison Results}
\label{sec:comparison_results}

\figref{fig:macro_result_combined}(a) shows the average normalized cost for \sysname, four baseline systems and the optimal solution under \evalnum workloads, while \figref{fig:macro_result_combined}(b) shows the cumulative distribution function (CDF) of normalized cost. Among five inference systems, \sysname achieves the minimum serving cost for \textit{all} workloads. Compared to \sysname, the four baseline systems require an average extra cost of $49.3\%-137.2\%$ and a maximum extra cost of $91.3\%-220.0\%$. \tabref{tab:eval_comparison} shows the differences in design choices between \sysname and the four baselines. Since the performance differences might come from a combination of multiple factors, we conjecture that the advantage of \sysname comes from the following ones.



First, the request dispatching policy of the baseline systems can \textit{not} minimize the worst case latency of candidate configurations. \sysname's TC dispatch policy achieves a lower worst case latency than baseline systems', so \sysname can leverage the saved latency to choose configurations with larger throughput or allocate more latency budget to other modules, both of which can reduce the serving cost.


Second, the module scheduling policy of the baseline systems can \textit{not} maximize the module throughput under the given latency budget. \sysname supports multiple configurations per module for higher resource efficiency, as opposed to baseline systems' fixed settings. \sysname supports hardware heterogeneity to find proper computation hardware for each module. Besides, \sysname leverages dummy generator and latency reassigner to increase module throughput of residual workload, which further reduces the serving cost.

Third, the latency splitting policy of the baseline systems can \textit{not} utilize the latency efficiently. Early serving systems~\cite{crankshaw2017clipper} have limited support for multi-DNN applications. Recent serving systems~\cite{shen2019nexus, hu2021scrooge, crankshaw2020inferline} use quantized interval or throughput-based splitting strategies, which have high runtime complexity or sub-optimal results. \sysname leverages latency-cost efficiency to gradually allocate latency budget, and supports node merger and cost-direct to further minimize the total cost.

Besides, as shown in \figref{fig:macro_result_combined}(b), compared to the optimal solution derived from brute force search, \sysname's cost is larger than the optimal for \textit{only} $8.5\%$ workloads with up to $12.1\%$ extra cost. However, the average runtime of brute force is $35.9$ sec per workload, while \sysname's runtime is \textit{only} $0.005$ sec. Namely, \sysname derives the optimal for $91.5\%$ workloads, while being more than $7000$ times faster.


\subsection{Ablation Study}
\label{sec:ablation}
In this section, we disable each of \sysname's novel features to quantify its capability on cost minimization.


\parab{The importance of TC dispatch.} \sysname designs TC dispatch to minimize the worst case latency \codesub{L}{wc} for all candidate configurations. To verify its capability to minimize the cost, we compare \sysname against: (1) \sysabbr-$2$d, which dispatches requests as individual ones~\cite{shen2019nexus, crankshaw2017clipper, crankshaw2020inferline}, and (2) \sysabbr-dt, which dispatches requests at rate of machine throughput~\cite{hu2021scrooge}.

\begin{figure}[t]
    \centering
    \subfloat{\includegraphics[width=0.235\textwidth]{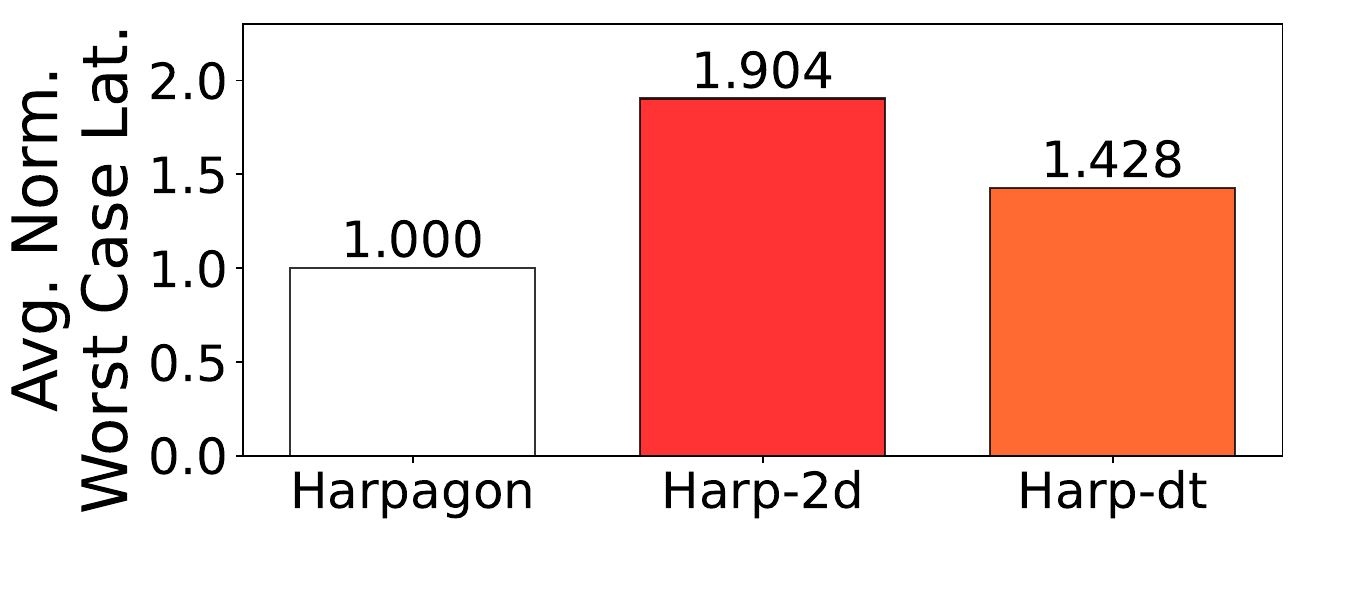}}
    \subfloat{\includegraphics[width=0.235\textwidth]{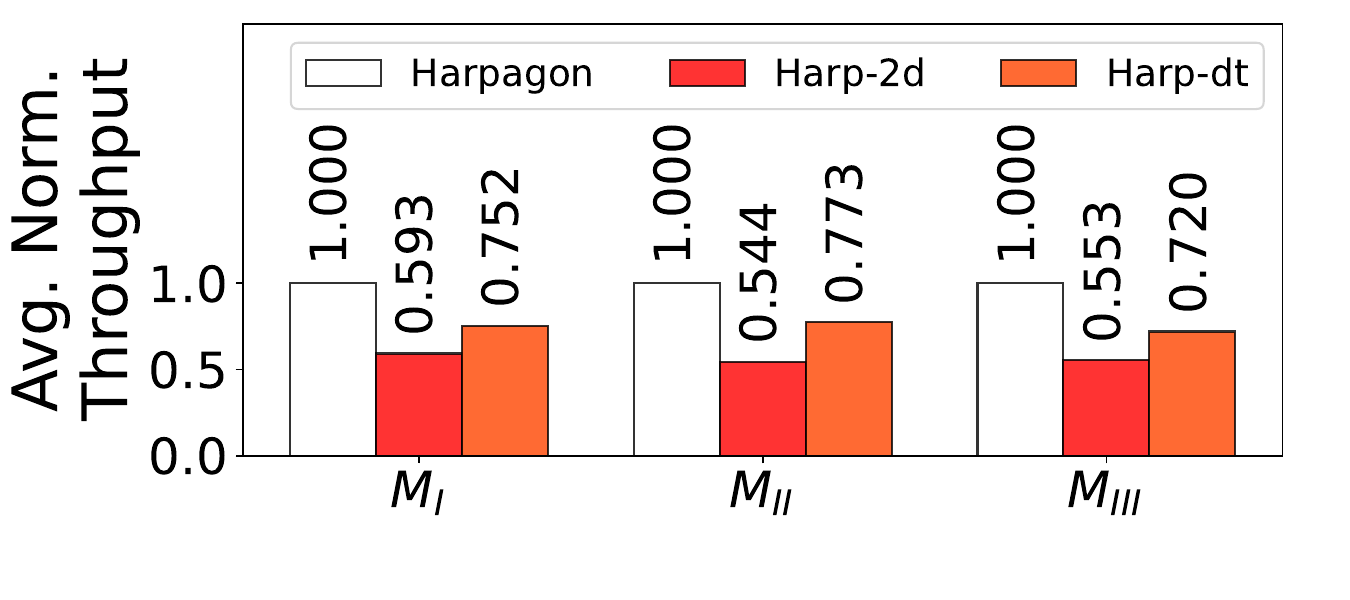}}
    \caption{(a) The average norm. worst case latency for \sysabbr-$2$d and \sysabbr-dt; (b) The average norm. throughput of three given modules.}
    \label{fig:ablation_1_combined}\tight
\end{figure}


\figref{fig:ablation_all_a} includes the normalized cost for \sysabbr-$2$d and \sysabbr-dt, which require $79.6\%$ and $44.1\%$ extra cost on average. As discussed in \secref{sec:dispatching}, TC dispatch minimizes \codesub{L}{wc} with higher batch collection rate. To verify that, we choose configurations derived from \sysabbr-$2$d under all \evalnum workloads as configuration input to \sysname, \sysabbr-$2$d and \sysabbr-dt for request dispatching. \figref{fig:ablation_1_combined}(a) shows the average normalized \codesub{L}{wc}. Given the \textit{same} configuration, \sysabbr-$2$d's RR dispatch policy and \sysabbr-dt's throughput-based dispatch policy require an extra latency of $90.4\%$ and $42.8\%$ on average, indicating \sysname's TC dispatch's capability to minimize \codesub{L}{wc}.

Moreover, to verify TC dispatch's capability on increasing throughput, \figref{fig:ablation_1_combined}(b) shows the average normalized throughput of three given modules, where the throughput for \sysabbr-$2$d and \sysabbr-dt is $40.7-45.6\%$ and $22.7-28.0\%$ lower than \sysname's. The alternative systems have to choose batch sizes with smaller throughput due to their inefficient request dispatch policy. Meanwhile, with the minimum \codesub{L}{wc}, \sysname can choose batch sizes with larger throughput by leveraging the saved latency budget, leading to a smaller serving cost.

\parab{The importance of multiple configurations.} \sysname supports multiple configurations for each module to maximize the throughput. To quantify its benefit, we compare \sysname against: (1) \sysabbr-$1$c, which assigns each module one configuration~\cite{crankshaw2017clipper, crankshaw2020inferline}, and (2) \sysabbr-$2$c, which assigns each module with two-tuple configurations~\cite{shen2019nexus, hu2021scrooge}.


\begin{figure}[t]
    \subfloat{\includegraphics[width=0.235\textwidth]{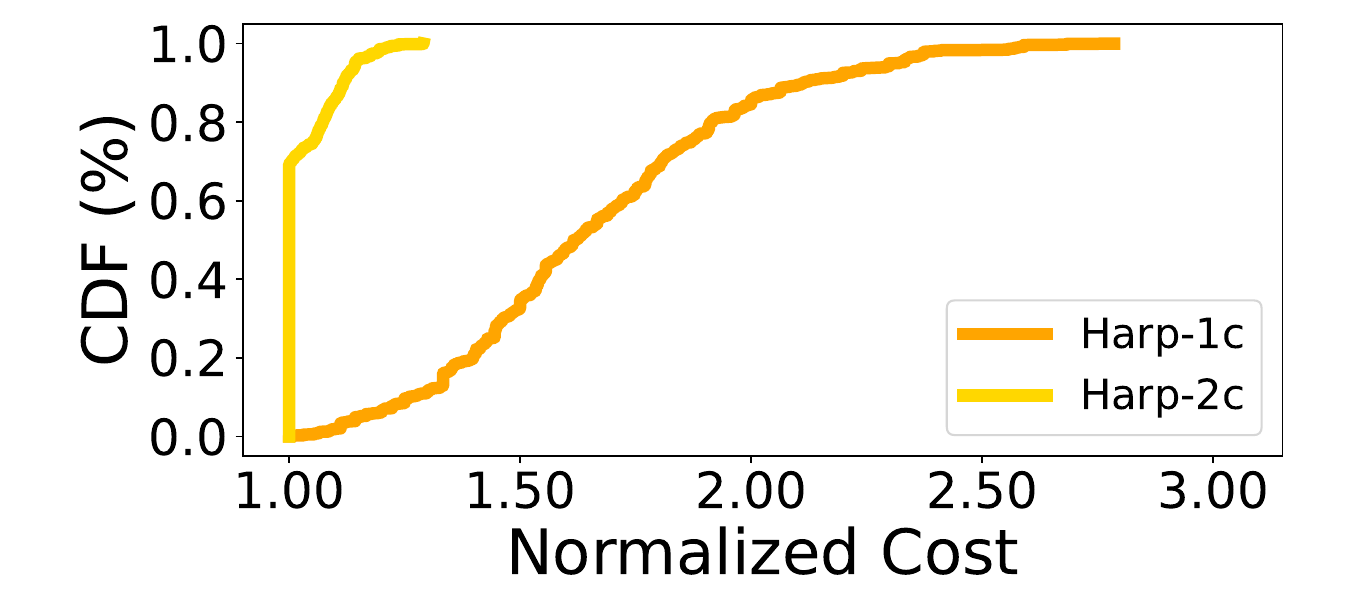}}
    \subfloat{\includegraphics[width=0.235\textwidth]{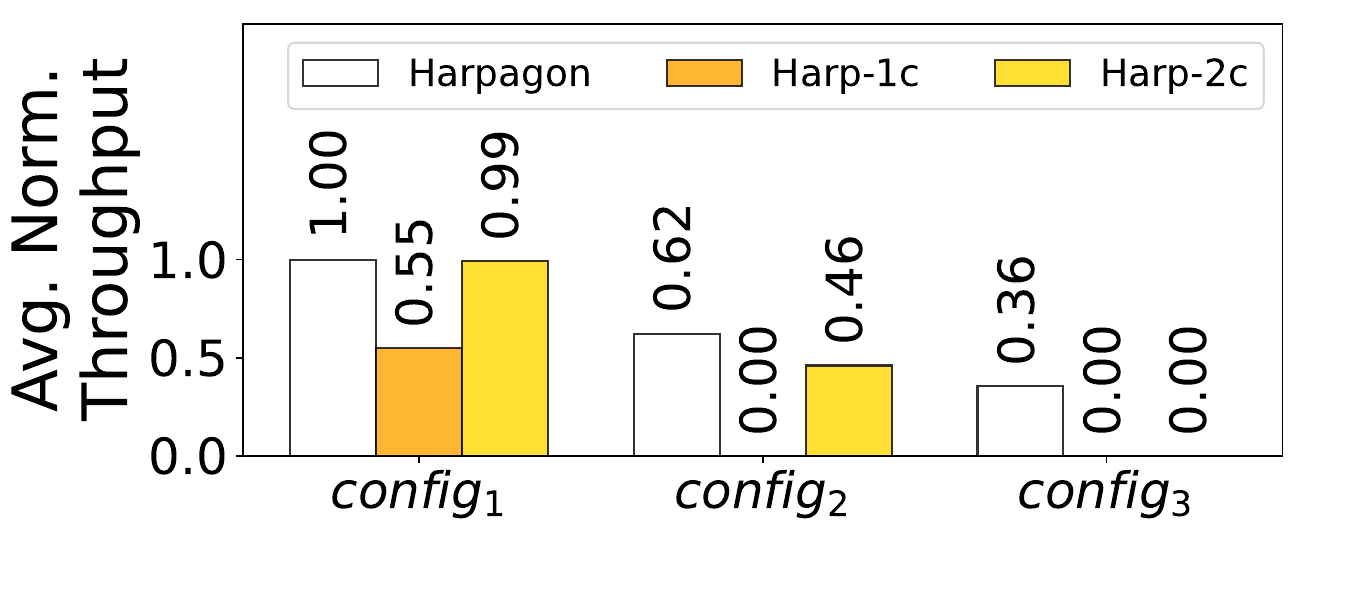}}
    \caption{(a) The CDF of normalized cost for \sysabbr-$1$c and \sysabbr-$2$c; (b) The average normalized throughput of configurations for \sysabbr-$1$c and \sysabbr-$2$c.}
    \label{fig:ablation_2_combined}\tight
\end{figure}

\figref{fig:ablation_all_a} includes the normalized cost for \sysabbr-$1$c and \sysabbr-$2$c, which require an average of $66.5\%$ and $3.0\%$ extra cost. \figref{fig:ablation_2_combined}(a) shows the CDF of the normalized cost. \sysabbr-$1$c has a larger serving cost for almost \textit{all} \evalnum workloads and a maximum of $178.6\%$ extra cost, while \sysabbr-$2$c has the same cost as \sysname for $67.6\%$ workloads and a maximum of $29.0\%$ extra cost. To understand why, \figref{fig:ablation_2_combined}(b) shows the average normalized throughput for a given module. The throughput for \sysabbr-$1$c's sole configuration is $45.0\%$ lower than \sysname's.  The throughput for \sysabbr-$2$c's first configuration is almost the same as \sysname's, but its second one is $26.1\%$ lower than \sysname's due to latency constraint. Among all workloads, \sysname assigns $32.4\%$ of them with more than two configurations to maximize the throughput, leading to a lower serving cost.




\parab{The importance of batching and heterogeneity.} \sysname selects proper batch size and computation hardware to minimize the serving cost. To quantify its importance, we compare \sysname against: (1) \sysabbr-nb, which disables batching, (2) \sysabbr-nhc, which always selects the cheapest hardware, and (3) \sysabbr-nhe, which selects the most expensive one.

\figref{fig:ablation_all_a} includes the normalized cost for the three alternatives. \sysabbr-nb has the highest serving cost with an average extra cost of $89.6\%$, suggesting the importance of batching on achieving the cost minimum goal. \sysabbr-nhc and \sysabbr-nhe require an average of $23.2\%$ and $14.0\%$ extra serving cost, which also indicates the necessity of selecting the proper hardware to reduce the cost. Both batching and heterogeneity reduce the serving cost by increasing the module throughput under the latency constraint. \figref{fig:ablation_3a} shows the average throughput for a given module under \evalnum workloads, where the throughput for the alternatives is $68.0\%$, $31.2\%$ and $7.2\%$ lower than \sysname's. Interestingly, \sysabbr-nhe's throughput for the given module is \textit{larger} than \sysname's among $4.9\%$ workloads. This is because after disabling heterogeneity, \sysabbr-nhe splits the latency differently, leaving the given module a larger latency budget. For all $4.9\%$ workloads, the average throughput of \sysabbr-nhe for the rest modules is much lower than \sysname's, leading to a higher total cost.


\begin{figure}[t]
\begin{minipage}[t]{0.48\linewidth}
    \includegraphics[width=\linewidth]{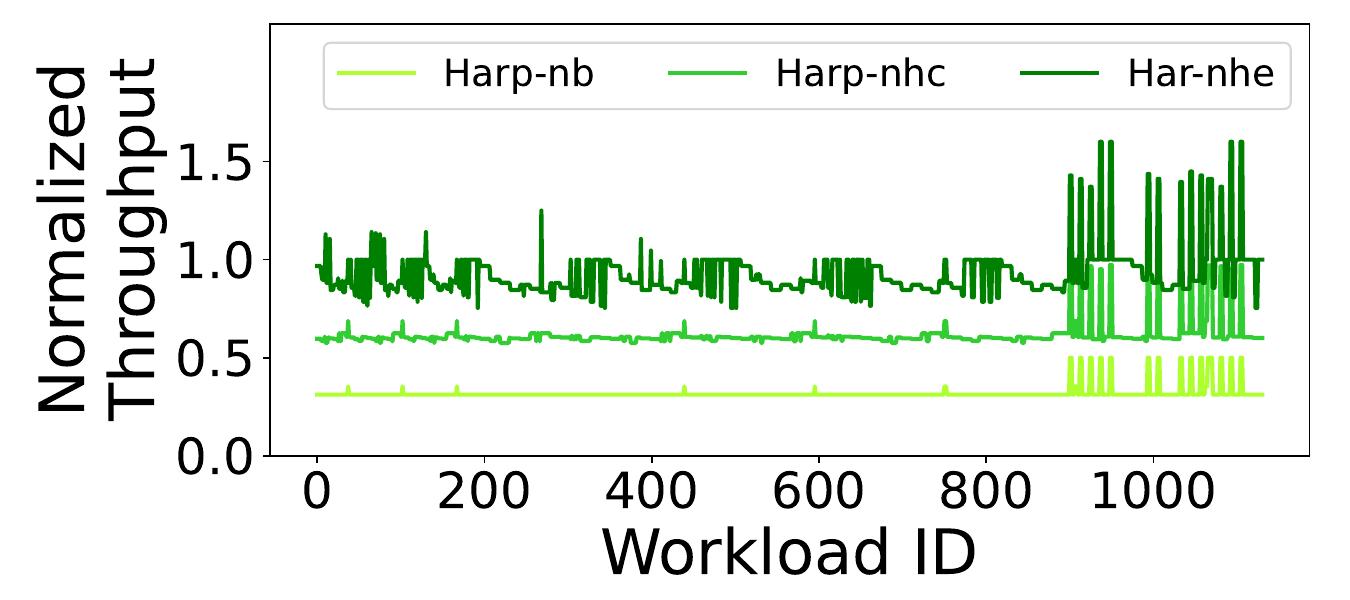}
    \caption{The normalized throughput for \sysabbr-nb, \sysabbr-nhc and \sysabbr-nhe.}
    \label{fig:ablation_3a}
\end{minipage}%
    \hfill%
\begin{minipage}[t]{0.48\linewidth}
    \includegraphics[width=\linewidth]{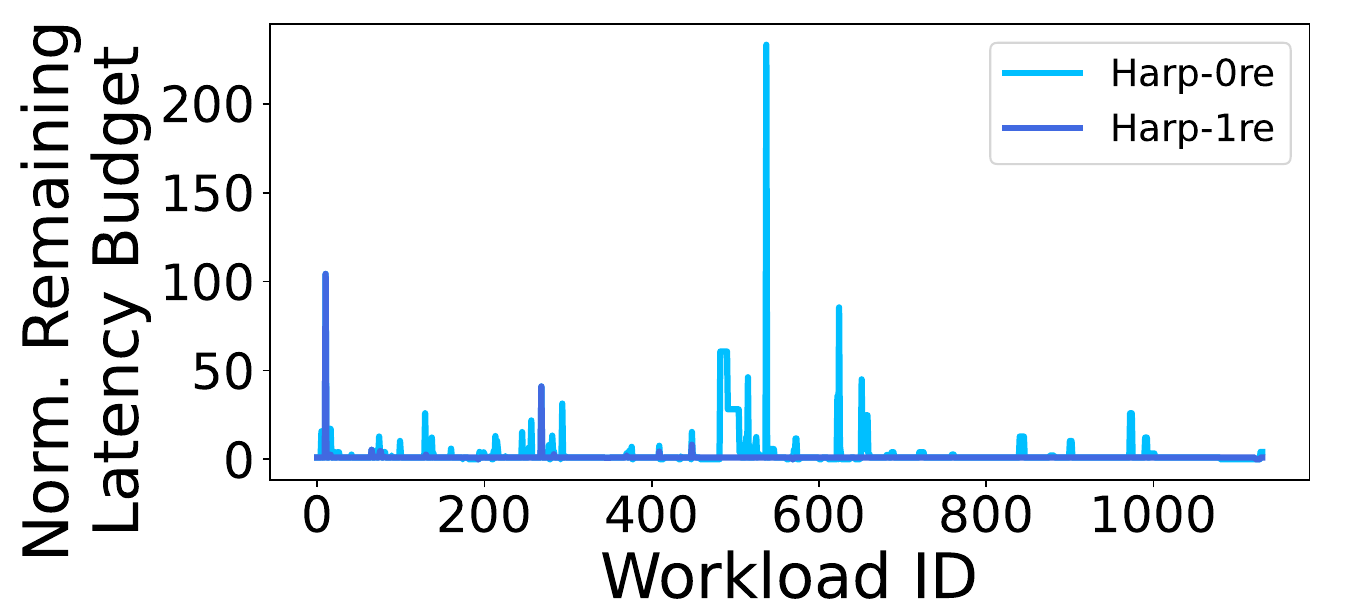}
    \caption{The norm. remaining latency budget for \sysabbr-$0$re and \sysabbr-$1$re.}
    \label{fig:ablation_5a}
\end{minipage} \tight
\end{figure}



\parab{The importance of optimizing residual.} As discussed in \secref{sec:scheduling}, \sysname adds dummy requests and reassigns the remaining latency budget to optimize the residual workload.


To quantify the benefit of adding dummy requests, we compare \sysname against \sysabbr-nd, which does not add dummy requests. \figref{fig:ablation_all_a} includes the normalized cost of \sysabbr-nd, which requires $0.8\%$ extra cost on average. For all \evalnum workloads, \sysname adds dummy request for $15.8\%$ of them, among which dummy requests increase the throughput of residual workload by $80.8\%$ with $10.9\%$ extra computation resources, leading to a $4.8\%$ cost reduction.


To quantify the benefit of reassigning the remaining latency budget, we compare \sysname against: (1) \sysabbr-$0$re, which does not reassign the latency budget, and (2) \sysabbr-$1$re, which reassigns the latency budget to one module greedily. \figref{fig:ablation_all_a} includes the normalized cost for the two alternatives, which require an average of $1.0\%$ and $0.6\%$ and a maximum of $14.6\%$ and $12.5\%$ extra cost. \figref{fig:ablation_5a} shows the remaining latency budget, where the one for \sysabbr-$0$re and \sysabbr-$1$re is $2.93$ and $1.14$ times of \sysname's with a maximum of $233.3$ and $104.3$ times respectively. Among all workloads, \sysname reassigns the remaining latency budget for at least once for $23.0\%$ workloads to further reduce the cost.



\parab{The importance of latency-cost efficiency}. \sysname leverages latency-cost efficiency to split latency across modules. To quantify its benefit, we compare \sysname against \sysabbr-tb, which uses module throughput to split latency~\cite{hu2021scrooge, crankshaw2020inferline}.



\begin{figure}[t]
\begin{minipage}[t]{0.48\linewidth}
    \includegraphics[width=\linewidth]{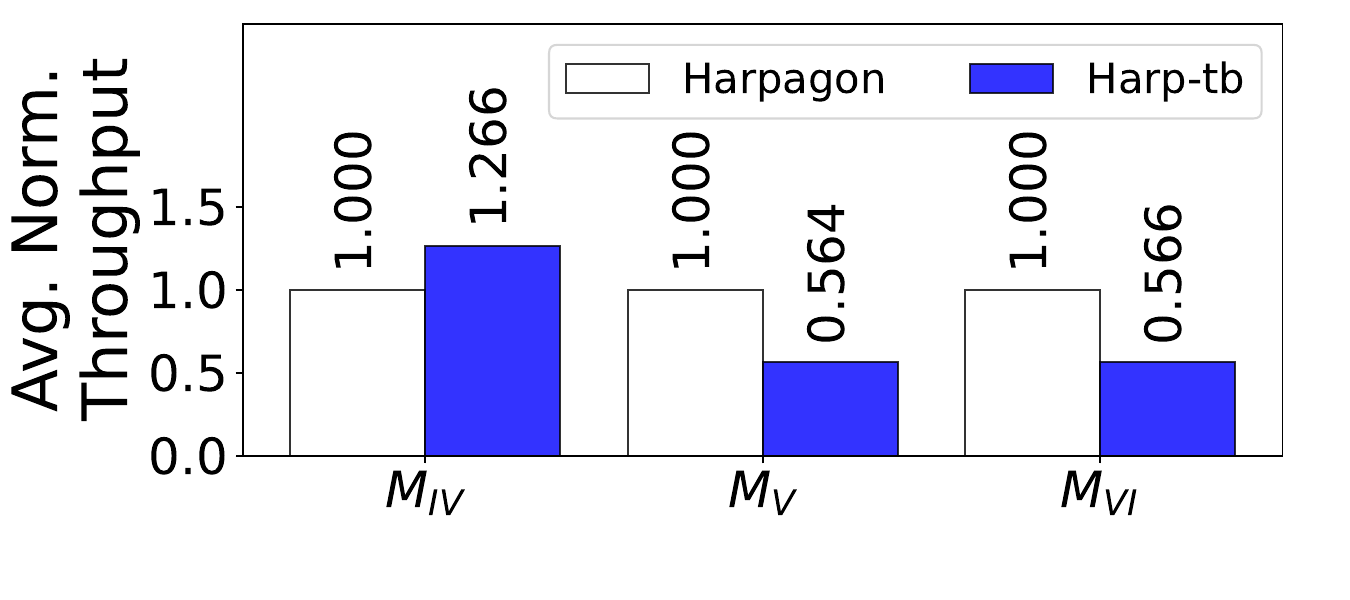}
    \caption{The average normalized throughput for a three-module app.}
    \label{fig:ablation_6b}
\end{minipage}%
    \hfill%
\begin{minipage}[t]{0.48\linewidth}
    \includegraphics[width=\linewidth]{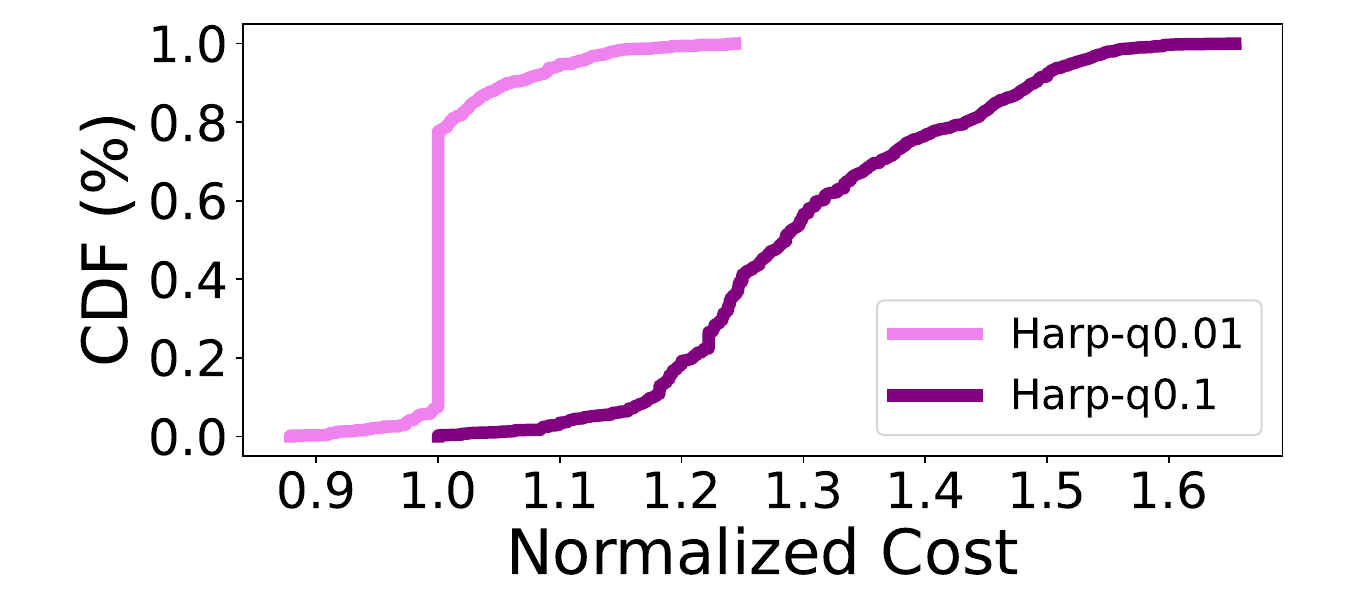}
    \caption{The CDF of normalized cost for \sysabbr-q$0.01$ and \sysabbr-q$0.1$.}
    \label{fig:ablation_7a}
\end{minipage} \tight
\end{figure}

\figref{fig:ablation_all_a} includes the normalized cost for \sysabbr-tb, which requires an average of $35.3\%$ extra cost. We measure the required iterations to derive the splitting results for \sysname and \sysabbr-tb. For all workloads, \sysname uses $10.9$ iterations on average, while \sysabbr-tb only uses $3.2$ iterations. As discussed in \secref{sec:splitting}, \sysname uses latency-cost efficiency to \textit{gradually} allocate the latency budget across modules, which avoids achieving sub-optimal results as \sysabbr-tb. 

Besides, we find that the advantage of \sysname over \sysabbr-tb is larger for applications with more modules. \figref{fig:ablation_6b} shows the average normalized throughput for a three-module application. \sysabbr-tb assigns most of its latency budget to module \codesub{M}{IV}, so as to maximize the corresponding throughput. However, the remaining latency budget for the rest modules is limited. Namely, the alternative tends to allocate more latency budget to modules with higher throughput, which ignores the marginal cost reduction of latency budget. Meanwhile, \sysname considers the amount of cost that per unit of latency budget can save, which enables \sysname to minimize the cost, especially for applications with multiple modules.

\parab{The importance of not quantized.} To derive per-module latency budget, an alternative is to quantize latency SLO into discrete intervals~\cite{shen2019nexus}. To show the benefit of \sysname, we compare it against: (1) \sysabbr-q$0.01$, which quantizes latency SLO into discrete interval in step of $0.01$ sec, and (2) \sysabbr-q$0.1$, which quantizes in step of $0.1$ sec.

\figref{fig:ablation_all_a} includes the normalized cost for the two alternatives, while \figref{fig:ablation_7a} shows the corresponding CDF. \sysabbr-q$0.1$ requires an average of $30.6\%$ and a maximum of $65.5\%$ extra serving cost. This is because the $0.1$-sec discrete interval is too coarse to split the latency SLO properly. Meanwhile, the serving cost of \sysabbr-q$0.01$ is close to \sysname's with an average of $1.2\%$ and a maximum of $24.4\%$ extra cost. Interestingly, for $7.3\%$ workloads, the serving cost of \sysabbr-q$0.01$ is smaller than \sysname's. This is because the quantized method is another kind of brute force search. However, quantized-based solution is too slow. The average runtime of \sysabbr-q$0.01$ is $2839$ ms, while \sysname's is only $5$ ms. Namely, though \sysabbr-q$0.01$ has lower cost for $7.3\%$ workloads, it has higher cost for $23.3\%$ workloads and a $567$ times runtime complexity.

\parab{The importance of splitting optimization.} \sysname's latency-cost efficiency based method can \textit{not} guarantee optimality, since latency splitting is NP-Complete. \sysname supports two splitting optimizations. To quantify the benefit of node merging and cost direct, we compare \sysname against: (1) \sysabbr-nnm, which disables node merging, and (2) \sysabbr-ncd, which disables cost direct. \figref{fig:ablation_all_a} includes the normalized cost for \sysabbr-nnm and \sysabbr-ncd, which requires an average of $0.2\%$ and $0.3\%$ extra serving cost. The alternatives benefit $3.98\%$ and $5.07\%$ workloads, among which they reduce $4.1\%$ and $5.2\%$ serving costs on average.

\section{Related Work}
\label{sec:related}
\parab{DNN Inference Systems.} Multiple DNN inference systems have been proposed recently, but \sysname differs from them with its unique designs. Nexus~\cite{shen2019nexus} is designed to schedule DNN models under latency constraints. However, it uses round-robin dispatch policy, which can not achieve the cost-minimum goal. Scrooge~\cite{hu2021scrooge} supports multi-dimensional scheduling, however, it only considers a maximum of two configurations and uses throughput-based heuristic to split the latency, leading to sub-optimal solution. InferLine~\cite{crankshaw2020inferline} and Clipper~\cite{crankshaw2017clipper} only consider one configuration per module. INFaaS~\cite{romero2021infaas} chooses appropriate DNN model among model variants by using a state machine to track variant performance, but it has limited support to split the latency for multi-DNN applications. Edge systems~\cite{hu2021rim, zeng2024efficient, she2023demand, wang2024Gecko, wang2024minimizing} are designed for small clusters and cannot be applied for cloud directly.




\parab{DNN Training Systems.} Gandiva~\cite{xiao2018gandiva} and AntMan~\cite{xiao2020antman} use spatial multiplexing and heterogeneous hardware to accelerate training process. Tiresias~\cite{gu2019tiresias} and Themis~\cite{mahajan2020themis} schedule training jobs in the cluster to minimize the job completion time. The design of \sysname is inspired by various DNN training systems~\cite{lao2021atp, zhao2022multi}, however, their scheduling capability can not satisfy inference job's millisecond latency objectives.

\section{Conclusion}
\label{sec:conclusion}
In this paper, we presented a cost-minimum DNN inference system called \sysname, which serves DNN workload efficiently to minimize the serving cost while satisfying the latency objectives. \sysname dispatches batched requests across machines to minimize the worst case latency of candidate configurations, which enables it to choose configurations with larger throughput. Besides, \sysname schedules modules with multiple configurations and leverages residual optimizer to maximize the throughput for both majority and residual workload. Moreover, for multi-DNN applications, \sysname splits the end-to-end latency into per-module latency budget using latency-cost efficiency and supports splitting optimizer to maximize the latency efficiency. Evaluation shows that \sysname outperforms the state of the art by $1.49$ to $2.37$ times in serving cost on average while satisfying the latency objective. \sysname derives the lower bound cost for more than $91\%$ workloads with millisecond-level runtime.

\bibliographystyle{unsrt}
\bibliography{references.bib}


\end{document}